\newtheorem{claim}{Claim}[theorem]
\newtheorem{remark}{Remark}
\def\doctype{2}
\def\tsubmission{1}
\newcommand{\full}[1]{}
\newcommand{\submit}[1]{#1}
\newcommand{\full}[1]{#1}
\newcommand{\submit}[1]{}
\def\asflag{1}
\def\yes{1}
	\newcommand{\as}[1]{#1}
	\newcommand{\as}[1]{}
\def\ceil#1{\lceil {#1} \rceil}
\newenvironment{proofof}[1]{\smallskip\noindent{\bf Proof of #1:}}%
        {\hspace*{\fill}$\Box$\par}
\newcommand{\ignore}[1]{}
\newcommand{\D}{\mathsf B}
\newcommand{\R}{\mathbb R}
\newcommand{\eps}{\varepsilon}
\newcommand{\poly}{\mathrm{poly}}
\newcommand{\md}[1]{\ (\operatorname{mod} #1)}
\newcommand{\cE}{{\mathcal E}}
\newcommand{\cP}{{\mathcal P}}
\newcommand{\tomega}{\widetilde{\Omega}}
\newcommand{\pathtester}{{\tt path-tester}}
\newcommand{\bS}{\mathbf{S}}
\newcommand{\Exp}{\mathbf{E}}
\newcommand{\Sec}[1]{\hyperref[sec:#1]{\S\ref*{sec:#1}}} 
\newcommand{\Eqn}[1]{\hyperref[eq:#1]{(\ref*{eq:#1})}} 
\newcommand{\Fig}[1]{\hyperref[fig:#1]{Fig.\,\ref*{fig:#1}}} 
\newcommand{\Tab}[1]{\hyperref[tab:#1]{Tab.\,\ref*{tab:#1}}} 
\newcommand{\Thm}[1]{\hyperref[thm:#1]{Theorem\,\ref*{thm:#1}}} 
\newcommand{\Lem}[1]{\hyperref[lem:#1]{Lemma\,\ref*{lem:#1}}} 
\newcommand{\Cor}[1]{\hyperref[cor:#1]{Corollary~\ref*{cor:#1}}} 
\newcommand{\Def}[1]{\hyperref[def:#1]{Definition~\ref*{def:#1}}} 
\newcommand{\Alg}[1]{\hyperref[alg:#1]{Alg.~\ref*{alg:#1}}} 
\newcommand{\Ex}[1]{\hyperref[ex:#1]{Ex.~\ref*{ex:#1}}} 
\newcommand{\Clm}[1]{\hyperref[clm:#1]{Claim~\ref*{clm:#1}}} 
\newcommand{\Prop}[1]{\hyperref[prop:#1]{Prop.~\ref*{prop:#1}}} 
\colorlet{shadecolor}{blue!10}
\begin{document}
\title{\sc A \lowercase{$o(n)$} Monotonicity Tester for Boolean Functions over the Hypercube}

\author{D.  Chakrabarty\thanks{Microsoft Research, 9 Lavelle Road, Bangalore 560001, India. {\tt dechakr@microsoft.com}}
\and
C. Seshadhri\thanks{Sandia National Labs, Livermore, USA. {\tt scomand@sandia.gov}}}

%

\def\hf{\hat{f}}
\def\hg{\hat{g}}
\def\Inf{{\sf Inf}}
\def\Viol{{\sf Viol}}
\def\I{{\mathbf I}}
\def\V{{\mathbf V}}
\def\R{{\mathbf R}}
%

\maketitle
\begin{abstract}
A Boolean function $f:\{0,1\}^n \mapsto \{0,1\}$
is said to be $\eps$-far from monotone if $f$ needs to be modified in at least $\eps$-fraction of the points to make it monotone. We design a randomized tester that is given oracle access to $f$ and an input parameter $\eps>0$, and has the following guarantee: It outputs {\sf Yes} if the function is monotonically non-decreasing, and outputs {\sf No} with probability $>2/3$, if the function is $\eps$-far from  monotone.  This non-adaptive, one-sided tester makes  \submit{\\}$O(n^{7/8}\eps^{-3/2}\ln(1/\eps))$ queries to the oracle.

\end{abstract}
\submit{
\category{F.2.2}{Analysis of algorithms and problem complexity}{Nonnumerical Algorithms and Problems}[Computations on discrete structures]
\category{G.2.1}{Discrete Mathematics}{Combinatorics}[Combinatorial algorithms]
\terms{Theory}
\keywords{Property Testing, Monotonicity, Random walks}
}
\def\p{\mathbf p}
\def\B{\{0,1\}^n}

\section{Introduction}
Testing monotonicity of Boolean functions is a classical question in property testing. 
The Boolean hypercube $\B$ defines a  natural partial order with $x\prec y$ iff $x_i\leq y_i$ for all $i\in [n]$. A Boolean function $f:\{0,1\}^n\mapsto \{0,1\}$ is {\em monotone} if $f(x) \leq f(y)$ whenever $x\prec y$. 

A Boolean function's {\em distance} to monotonicity is the minimum fraction of points at which it needs to be modified to make it monotone. In the property testing framework we are provided oracle access to the function $f$ and are given a parameter $\eps > 0$. A \emph{monotonicity tester} is an algorithm that accepts if the function is monotone, and rejects if the function is $\eps$-far from monotone.  The tester is allowed to be randomized, and has to be correct with non-trivial probability (say $> 2/3$).   The tester is called {\em one-sided} if the tester always accepts a monotone function. The tester is {\em non-adaptive} if the queries made by the algorithm do not depend on the answers given by the oracle.

The quality of a monotonicity tester is governed by the number of oracle queries as well as the running time. Goldreich et al.~\cite{GGLRS00} suggested the following simple tester: query the function value on a pair of points that differ on exactly a single coordinate and reject if monotonicity is violated. In other words, the tester samples a random edge of the hypercube and checks for monotonicity between the two endpoints. This is called the {\em edge tester} for monotonicity. \full{It is clear the running time is of the same order as the number of queries}

Goldreich et al.~\cite{GGLRS00} show that $O(n/\eps)$-queries by the edge tester suffice to test monotonicity. Their analysis is tight, so the edge tester can do no better. They explicitly ask whether there exists a tester with an improved query complexity in terms of $n$. Fischer et al.~\cite{FLNRRS02} show that any non-adaptive, one-sided tester\footnote{\cite{FLNRRS02} also show a $\Omega(\log n)$ lower bound for $2$-sided testers.} for monotonicity must make 
$\Omega(\sqrt{n})$-queries for constant $\eps>0$. While monotonicity has been extensively studied in property testing~\cite{EKK+00, GGLRS00,DGLRRS99,LR01,FLNRRS02,HK04,PRR04,ACCL04,BCG+10,BBM11,CS12}, no significant progress had been made on this decade old question of testing monotonicity of Boolean functions. Our main result is an affirmative answer to the above question of~\cite{GGLRS00}.

\begin{theorem}\label{thm:main}
There exists a one-sided, non-adaptive\submit{\\} $O(n^{7/8}\eps^{-3/2}\ln^{-1}(1/\eps))$-query monotonicity tester for Boolean functions $f:\{0,1\}^n \mapsto \{0,1\}$.
\end{theorem}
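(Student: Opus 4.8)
The plan is to run a small family of one-sided, non-adaptive testers, each a constant number of times, and to accept only if every run accepts. Besides the edge tester of~\cite{GGLRS00}, the key new primitive is a \pathtester{} with a length parameter $\tau$: pick $x\in\B$ uniformly, run a lazy \emph{monotone} random walk of $\tau$ steps (each step picks a uniform coordinate $i$ and, if we are moving upward, sets $x_i$ to $1$, or if downward sets $x_i$ to $0$, leaving $x$ unchanged if $x_i$ already has that value), arriving at a comparable point $y$, and reject iff the pair $(x,y)$ is violated. We run both the upward and the downward version. Each trial queries only $f(x)$ and $f(y)$, so it costs two queries, and a rejection always exhibits a genuinely comparable violating pair; hence every tester in the family is non-adaptive and one-sided, and all of the work is in lower bounding the rejection probability when $f$ is $\eps$-far from monotone.

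\textbf{Step 1 (matching and dyadic reduction).} By the standard combinatorial reduction for monotonicity~\cite{DGLRRS99,FLNRRS02}, if $f$ is $\eps$-far from monotone then the violation graph on $\B$ -- an edge between $a\prec b$ whenever $f(a)=1>0=f(b)$ -- contains a matching of $\Omega(\eps 2^n)$ violated pairs. Partition this matching into dyadic buckets by the Hamming distance of its pairs; one bucket retains an $\Omega(1/\ln(1/\eps))$-fraction of the pairs (this dyadic partition is where the $\ln(1/\eps)$ factor of the bound originates), all at one distance scale $\tau$. Since we run the \pathtester{} for every dyadic $\tau$, it suffices to show that the length-$\tau$ \pathtester{} rejects with probability $\gtrsim n^{-7/8}\eps^{3/2}$ whenever there are $\Omega(\eps 2^n/\ln(1/\eps))$ violated pairs all at distance $\Theta(\tau)$.

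\textbf{Step 2 (persistence dichotomy).} Call a point $x$ with $f(x)=b$ \emph{$\tau$-persistent} if the length-$\tau$ lazy walk started at $x$ and moving along the direction in which a monotone function would be forced to keep the value $b$ (upward when $b=1$, downward when $b=0$) ends with value $b$ with probability at least $1/2$. If an $\alpha$-fraction of all points are non-$\tau$-persistent, then one of the two \pathtester{}s already rejects with probability $\Omega(\alpha)$: a non-$\tau$-persistent $1$-point is, by definition, the start of an upward length-$\tau$ walk whose endpoint $y\succeq x$ has $f(y)=0$ with probability $\ge 1/2$, so $(x,y)$ is violated, and symmetrically for $0$-points; this costs $O(1/\alpha)$ queries. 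It therefore remains to treat the regime in which all but an $\alpha$-fraction of points are $\tau$-persistent -- precisely the case where the walk from a typical matched endpoint \emph{tends to stay on its own side}, so that it is not at all obvious the \pathtester{} can see the matched violations, and this is the crux.

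\textbf{Step 3 (the persistent regime; main obstacle).} The approach I would take is this. Fix a matched pair $(a,b)$ with $a\prec b$, $f(a)=1>0=f(b)$, $|b|-|a|=\Theta(\tau)$, and pick a uniformly random $m$ in the interval $[a,b]$: since $f(a)=1$ and $f(b)=0$, either $f(m)=0$, making $(a,m)$ violated, or $f(m)=1$, making $(m,b)$ violated -- in either case we obtain a violated pair one of whose endpoints is not a prescribed point but is spread over a macroscopically large set near some level. Aggregating over the $\Omega(\eps 2^n/\ln(1/\eps))$ matched pairs, a uniformly random $x$ is such a ``lower endpoint'' with probability polynomially related to $\eps$, and -- crucially -- because persistence keeps the upward walk inside $f^{-1}(1)$ long enough, with probability $\Omega(1/\tau)$ the walk then lands in the corresponding large $0$-region within the upper shadow of $x$; the reason we had to route through a macroscopic region rather than aim at $b$ itself is that a $\tau$-step walk hits any prescribed point only with negligible probability. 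Finally, the $\alpha$-fraction of non-$\tau$-persistent points only perturbs these estimates additively, which must be kept below the main term; balancing this against the $O(1/\alpha)$ bound of Step 2 and expressing everything in terms of the single length $\tau$ (taken to be a suitable power of $n$) yields the stated $O(n^{7/8}\eps^{-3/2}\ln(1/\eps))$ queries. I expect essentially all the difficulty to lie in Step 3: making the ``random $m$ / macroscopic target / persistence'' mechanism quantitative -- in particular lower bounding the probability that the upward walk reaches the right region, and bounding the damage done by non-persistent points -- while tracking the several $\eps$- and logarithmic-factor losses so that they combine into exactly the claimed exponents.
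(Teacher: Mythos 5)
Your outline has the right overall architecture (a two-point directed random-walk tester combined with a dichotomy that splits the analysis into an ``easy rejection'' case and a ``matching of violations'' case), but the proof has a genuine gap exactly where you place it: Step 3 is not an argument, it is a restatement of the problem. The central quantitative claim --- that in the persistent regime, starting from a typical matched lower endpoint, the upward $\tau$-step walk lands in ``the corresponding large $0$-region'' with probability $\Omega(1/\tau)$ --- is never established, and it is not true in the generality you invoke it: the matched pairs at scale $\tau$ can be scattered so that the $0$-points above a given $1$-endpoint form a set the walk hits only with much smaller probability, and persistence of $f^{-1}(1)$ along the walk does not by itself steer the walk toward the violating region. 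This is precisely the difficulty the paper spends all its machinery on, and it resolves it differently: (i) a dichotomy theorem (a directed analogue of Margulis' theorem, $\Phi^+_f\cdot\Gamma^+_f\geq\eps^2/32$) proved via the Lehman--Ron vertex-disjoint routing theorem plus an alternating-paths/minimum-average-length-matching argument, which converts the matching of violated \emph{pairs} into either $\Omega(s\eps 2^n)$ violated \emph{edges} (edge tester wins) or a matching of $\Omega(\eps 2^n/s)$ violated \emph{edges} in the middle layers; and (ii) a ``blue-to-blue'' lemma showing the two endpoints sampled by the path tester behave almost like independent uniform samples, which, together with a one-to-one charging of blue-blue pairs $(x,y)$ to violating pairs $(x,E(y))$ across the edge matching (losing a factor $\Theta(\sigma/\sqrt n)$), gives the path tester success probability $\Omega(\sigma^3/(\sqrt n\ln(1/\eps)))$. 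Your dichotomy (persistent vs.\ non-persistent points) is a different and weaker tool than the paper's edge-count vs.\ edge-matching dichotomy, and nothing in your sketch replaces either ingredient (i) or (ii).

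A second, related problem is that the claimed query bound is not derived. In the paper the exponent $7/8$ comes from an explicit balance: edge tester success $\Phi^+_f/n\geq s/n$ against path tester success $\Omega(\sigma^3/(\sqrt n\ln(1/\eps)))$ with $\sigma=\Theta(\eps^2/s)$, optimized at $s=n^{1/8}\eps^{3/2}$. Your sketch contains no mechanism that produces a cubic dependence on the matching density (or any specific polynomial), so the final sentence ``balancing \dots yields $O(n^{7/8}\eps^{-3/2}\ln(1/\eps))$'' cannot be checked and is not supported by Steps 1--2. (Steps 1 and 2 themselves are fine qualitatively: the $\Omega(\eps 2^n)$ violation matching, the dyadic bucketing, the one-sidedness and non-adaptivity of the lazy directed walk tester, and the $\Omega(\alpha)$ rejection from non-persistent points are all sound; but they are the easy parts.)
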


\as{
We get an improved bound for functions with \emph{low average sensitivity}. Given a Boolean function $f$, the influence of dimension $i$ is the  fraction of edges of the hypercube crossing the $i$th dimension whose endpoints have different function values.  The average sensitivity, denoted as $\I(f)$, is the sum of all the $n$ influences. The functions defined in~\cite{FLNRRS02} to prove the lower bound of $\Omega(\sqrt{n})$ for non-adaptive, one-sided testers have constant average sensitivity, and hence the following is optimal for such functions.

\submit{\vspace{-3mm}}
\begin{theorem}\label{thm:asmon}
There exists a one-sided, non-adaptive\submit{\\} $O(n^{1/2}\eps^{-6}\I^{3}(f)\ln(1/\eps))$-query monotonicity tester 
for Boolean functions of average sensitivity $\I(f)$.
\end{theorem}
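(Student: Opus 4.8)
The plan is to run the \pathtester, instantiated as follows, and analyze it in the low-sensitivity regime. One invocation picks a uniformly random $x\in\B$, runs a random \emph{monotone} walk from $x$ (repeatedly flipping a uniformly chosen coordinate from $0$ to $1$), records $\ell=\ceil{\sqrt n}$ of the visited points $x=x^{(0)}\prec x^{(1)}\prec\cdots\prec x^{(\ell-1)}$ spaced $\Theta(\sqrt n)$ apart along the walk, queries $f$ at these $\ell$ points, and rejects iff $f(x^{(i)})>f(x^{(j)})$ for some $i<j$. This uses $O(\sqrt n)$ queries, is non-adaptive, and (since every $(x^{(i)},x^{(j)})$ with $i<j$ is a comparable pair) never rejects a monotone function. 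We make $T=\Theta(\I^{3}\eps^{-6}\ln(1/\eps))$ independent runs, for $O(\sqrt n\,\I^{3}\eps^{-6}\ln(1/\eps))$ queries in total, and accept iff no run rejects. Thus Theorem~\ref{thm:asmon} reduces to a \emph{detection lemma}: if $f$ is $\eps$-far from monotone, a single run rejects with probability $\Omega(\eps^{6}\I^{-3}/\ln(1/\eps))$. We may moreover assume $\I=\I(f)\le n^{1/8}$: otherwise $\sqrt n\,\I^{3}\eps^{-6}\ge n^{7/8}\eps^{-3/2}$, and we may instead run the tester of \Thm{main}.

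I would then reduce the detection lemma to a combinatorial estimate on long-range violations. For a scale $d$, write $p_d=\Pr_{u\prec v,\ \mathrm{dist}(u,v)=d}[\,f(u)=1,\ f(v)=0\,]$. The $\binom{\ell}{2}=\Theta(n)$ pairs $(x^{(i)},x^{(j)})$ produced by one run are, marginally, close to uniform comparable pairs at scale $\Theta(\sqrt n\,|i-j|)$, and two of the events ``$(x^{(i)},x^{(j)})$ is a violation'' can be badly positively correlated only through a shared endpoint, of which there are only $\ell$. A second-moment / birthday argument then shows that a run rejects with probability $\Omega(\min\{1,\sum_{i<j}p_{\Theta(\sqrt n\,|i-j|)}\})$, so it suffices to prove $\sum_d p_d\ge c\,\eps^{6}\I^{-3}/\ln(1/\eps)$; in fact I would aim to show that a \emph{single} scale $d^{\ast}=\Theta(\sqrt n)$ already satisfies $p_{d^{\ast}}\gtrsim\eps^{6}\I^{-3}/(n\ln(1/\eps))$, comfortably above the $\Theta(1/n)$ threshold that the $\Theta(n)$ chain-pairs demand.

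The combinatorial estimate is where average sensitivity enters, via a \emph{lifting} argument that stretches short violations into long ones using persistence of monotone walks. The edge-tester analysis of \cite{GGLRS00} (equivalently, the vertex-cover description of distance to monotonicity) gives $p_1\ge\eps/(2n)$, i.e.\ $\Omega(\eps 2^{n})$ violated edges. Call $w$ \emph{$d$-persistent} if a length-$d$ monotone walk from $w$ preserves $f(w)$ with probability at least $\tfrac78$; since the expected number of boundary edges crossed by a length-$d$ monotone walk from a uniform start is $O(d\,\I/n)$, Markov's inequality makes all but an $O(d\,\I/n)=o(1)$ fraction of points $d$-persistent for $d\le\sqrt n$ and $\I\le n^{1/8}$. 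Now for a violated edge $(u,v)$ with $f(u)=1>0=f(v)$: if $u$ fails to be $d^{\ast}$-persistent then a run whose walk starts at $u$ reaches a $0$ with probability $>\tfrac18$ and rejects, so a large supply of non-persistent lower endpoints finishes the proof directly; otherwise one extends $(u,v)$ by an independent length-$(d^{\ast}-1)$ monotone walk on the persistent side to get, with constant probability, a violated comparable pair at distance exactly $d^{\ast}$, and a counting argument (bounding how many violated edges can produce the same long pair, and tracking how the binomial normalisation $\binom{n/2}{d^{\ast}}$ transforms) converts this into the lower bound on $p_{d^{\ast}}$. Iterating over the $O(\log n)$ dyadic scales up to $d^{\ast}=\Theta(\sqrt n)$ and charging the persistence losses and over-counting at each scale produces the factor $\eps^{6}\I^{-3}/\ln(1/\eps)$.

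The hard part is making this lifting tight. The trouble is that the non-persistent points, although a vanishing fraction of $\B$, vastly outnumber $\eps 2^{n}$, so naively insisting on persistent endpoints could discard every violated edge; the dichotomy above (non-persistence on the relevant side is itself a cause for rejection, so either we win outright or almost all violated pairs are liftable) is what rescues it, but threading it --- and ensuring that after $O(\log n)$ dyadic lifts the cumulative loss remains $\poly(\eps,\I)$ rather than $n^{-\Omega(1)}$ --- is the delicate point. A secondary obstacle is justifying the birthday bound for a single run: the $\binom{\ell}{2}$ chain-pairs are heavily dependent, and one must verify that the only harmful correlations pass through shared endpoints and cost merely a constant factor.
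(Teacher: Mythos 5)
Your proposal does not follow the paper's route, and as written it has a genuine gap at its central step. The paper obtains \Thm{asmon} as a two-line corollary of machinery it has already built: every violated edge is an influential edge, so $\Phi^+_f \leq \I(f)$; the dichotomy theorem (\Thm{dich}) then forces a matching of violated edges in the middle layers of density $\Gamma^+_f \geq \eps^2/(32\I(f))$; and \Lem{piece2} (whose engine is the blue-to-blue bound of \Lem{blue}) applied with $\sigma = \eps^2/(32\I(f))$ gives a per-invocation success probability $\Omega\big(n^{-1/2}\eps^6\I^{-3}(f)\ln^{-1}(1/\eps)\big)$, which is amplified by independent repetitions. Your sketch uses none of this; instead it tries to manufacture the needed density of long-range violations directly from the Goldreich et al.\ edge bound ($\Omega(\eps 2^n)$ violated edges) by ``persistence lifting.'' That replacement is exactly where the argument breaks. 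To stretch a violated edge $(u,v)$, $u \prec v$, $f(u)=1$, $f(v)=0$, into a violated pair at distance $\Theta(\sqrt n)$ you need either the upper $0$-endpoint $v$ to retain value $0$ under upward walks, or the lower $1$-endpoint $u$ to retain value $1$ under downward walks. The failure of either of these produces only pairs of the form $v \prec v'$ with $f(v)=0<1=f(v')$, or $u' \prec u$ with $f(u')=0<1=f(u)$ --- i.e.\ non-violations --- so it is \emph{not} ``itself a cause for rejection.'' The persistence whose failure your tester can detect (upward persistence of the $1$-endpoint $u$) is not the one the lifting needs, so your dichotomy does not rescue the counting. Quantitatively, the non-persistent points have measure $O(d\,\I/n)=O(\I/\sqrt n)$, and each can be an endpoint of up to $n$ violated edges, so as many as $\Theta(\I\sqrt n\,2^n)$ violated edges can be unliftable --- which can exceed the guaranteed $\eps 2^{n-1}$ by a factor $\Theta(\I\sqrt n/\eps)$. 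In the worst case every violated edge is discarded and your argument certifies nothing; this is precisely the regime the paper handles by proving the structural statement $\Phi^+_f\cdot\Gamma^+_f \geq \eps^2/32$ via Lehman--Ron routing and the alternating-path machinery, rather than by persistence.

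Two further points. First, your target bound $p_{d^*} \gtrsim \eps^6\I^{-3}/(n\ln(1/\eps))$ and the claimed bookkeeping across $O(\log n)$ dyadic scales are asserted to ``produce the factor $\eps^6\I^{-3}/\ln(1/\eps)$,'' but no step of the sketch actually generates the $\I^{-3}$ dependence; the exponent is fitted to the theorem statement rather than derived. Second, the ``birthday / second-moment'' claim that the $\binom{\ell}{2}$ chain pairs of one run are harmfully correlated only through shared endpoints is itself nontrivial (all pairs share one walk), and the paper deliberately avoids this issue by having \pathtester{} query only two points per invocation and proving the pairwise bound \Lem{blue} instead. If you want a proof of \Thm{asmon} along the lines you can actually complete, the short route is the paper's: observe $\Phi^+_f \leq \I(f)$, invoke \Thm{dich} to get the large violated-edge matching in the middle layers, and feed it to \Lem{piece2}.
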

}

\begin{remark}[\bf Pair testers]
A {\em pair tester}~\cite{DGLRRS99}  describes a fixed distribution (independent of the function) on domain pairs $(x\prec y)$, makes independent queries on pairs drawn from this distribution, and rejects iff some drawn pair violates monotonicity.  By definition, pair testers are non-adaptive and one-sided. (Note that the edge tester is a pair tester.) 
Bri\"et et al.~\cite{BCG+10} show that any pair tester with a linear dependence of $\eps^{-1}$ must make  $\Omega(n/(\eps \log n))$ queries. The linear dependence is crucial in their argument. 
Our tester is also a pair tester. We circumvent the lower bound (on $n$) of~\cite{BCG+10} because of the worse dependence on $\eps$.
\end{remark}

\subsection{Main Ideas}\label{sec:main}
Our tester is a combination of the edge tester and what we call the {\em path} tester.
The path tester essentially does the following. It samples a random point $x$ on the hypercube, performs a sufficiently long random length walk on the {\em directed} hypercube to reach $y$, queries $f(x)$ and $f(y)$ and tests for monotonicity. We stress that the path tester does not query all the points along the path, but just the end points.

Our algorithm is inspired by 
a recent paper by Ron et al.~\cite{RRSW11}, which shows a $O(\sqrt{n})$-query randomized algorithm to estimate the average sensitivity of a {\em monotone} function. The algorithm essentially performs the operation above and counts
the number of mismatches;  Ron et al.~\cite{RRSW11} explicitly ask whether an algorithm ``in the spirit" above can be used for monotonicity.
Our answer is yes.

Consider a function $f$ which is $\eps$-far from monotone.
The aim of any tester is to detect a \emph{violation}, that is, a pair $x \prec y$ such that $1 = f(x) > f(y) = 0$.
The success probability of the edge tester is exactly the fraction of violated edges. The intuition is that
there are possibly many more violations that are ``far away" and the directed random walk will help detect those.
%
%
Consider the function $f:\{0,1\}^{n+1}\mapsto\{0,1\}$, $f(0,x) = 0$ if $|x|\leq n/2-2\sqrt{n}$ and $1$ otherwise; 
$f(1,x) = 0$ if $|x| \leq n/2 + 2\sqrt{n}$ and $1$ otherwise.  Here $|x|$ is the number of $1$s in $x$.
This function has a constant distance to monotonicity, and all the violated edges are of the form $((0,x),(1,x))$ for $n/2-2\sqrt{n} \leq |x|\leq n/2+2\sqrt{n}$. 
The edge tester detects a violation with probability only $\Theta(1/n)$. Suppose we pick a uniform random point and perform a random walk
of length $\sqrt{n}/2\leq \ell<\sqrt{n}$. If the starting point has $0$ in the first coordinate and any of the $\Theta(\sqrt{n})$ steps flips the first coordinate, 
the end points of the walk exhibit a violation to monotonicity. This happens with probability $\Theta(1/\sqrt{n})$, handily beating the edge tester.
%
%

The argument above required the violated edges to be aligned along one dimension.
In \Sec{match}, we prove that a directed random walk detects a violation with sufficiently high probability when there  is a large \emph{matching} of violated edges. One of the ingredients of this proof is the following  interesting combinatorial
observation. In \Sec{blue}, we prove that if a $\sigma$-fraction of the hypercube is marked blue, then the probability that the random walk starts and ends at a blue point is $\tomega(\sigma^2)$. It shows that the endpoints of this random walk, which are highly correlated, behave like two independent samples as far as being blue is concerned. 

But what if no large matching of violated edges exists? Take the `anti-majority' function defined as $f(x) = 1$ if $|x|\leq n/2$, and $f(x)=0$ otherwise. This function
is $1/2$-far from monotone, and yet the largest matching of violated edges is of size $\Theta(2^n/\sqrt{n})$. This is dealt with by our {\em dichotomy theorem}. In \Sec{dich}, we prove that for any $s>0$, either there exists a set of $\Theta(s\eps2^n)$ violated edges,
or there exists a matching of $\Theta(\eps2^n/s)$ violated edges. With this we are done; in the former case, the edge tester suffices, in the latter the path tester suffices. 

The proof of our dichotomy theorem combines two ideas discovered earlier in the context of monotonicity testing. 
The first is a theorem of Lehman and Ron \cite{LR01} on multiple source-sink routing over the hypercube. 
The second is an alternating paths
machinery developed by the authors in a separate work~\cite{CS12}\submit{(an extended abstract appears in this proceedings)} on general range monotonicity testing.

\subsection{Isoperimetry for the directed hypercube} \label{sec:iso}

The problem of Boolean monotonicity testing is intimately connected with isoperimetric questions
on the \emph{directed} hypercube. We use $E$ for the set of undirected edges of the hypercube and $E(S,T)$
for the set of undirected edges from $S$ to $T$. Similarly, we use $E^+$ and $E^+(S,T)$ to 
denote the directed versions.

Any function $f:\{0,1\}^n \mapsto \{0,1\}$ can be thought
of as an indicator for the subset $S = \{x | f(x) = 1\}$. We use $\mu$ to denote
$|S|/2^n$, the uniform measure of $S$.
Let $\Phi(S)$ be the total influence of $S$,
which is $|E(S,\overline{S})|/2^{n-1}$. Let $\partial(S)$ be the \emph{boundary of $S$},
that is, $\{x | (x,y) \in E, x \in S, y \notin S\}$. The standard edge isoperimetric bound
for the undirected hypercube states that $\Phi(S) \geq 2\mu$, whenever $\mu \leq 1/2$. Harper's theorem~\cite{Har66} proves
that $|\partial(S)|$ is minimized when $S$ is a Hamming ball. Margulis~\cite{Mar74} proves
the remarkable fact that \emph{both} $\Phi(S)$ and $\partial(S)$ cannot be minimized simultaneously.
Formally, he proves that $\Phi(S)\cdot |\partial(S)| = \Omega(\mu^2)$, whenever $\mu\leq 1/2$.
(This is actually proven for the general  $p$-biased measures.)

What about the directed hypercube? We can define $\Phi^+(S) = |E^+(S,\overline{S})|/2^{n-1}$
and $\partial^+(S) = \{x | (x,y) \in E^+, x \in S, y \notin S\}$. Let $\eps_f$ denote
the distance of $f$ to monotonicity.
The success probability of the edge tester
is precisely $\Phi^+(S)/n$, and the classic theorem of Goldreich et al.~\cite{GGLRS00}
proves that $\Phi^+(S) = \Omega(\eps_f)$. Our dichotomy theorem is really a directed
version of Margulis' theorem. We prove that $\Phi^+(S)\cdot |\partial^+(S)| = \Omega(\eps^2_f)$.
It is interesting to note how $\eps_f$ takes the place of $\mu$ in the undirected bounds.

\ignore{
The key observation in~\cite{RRSW11} is the following: for monotone functions, a path from $0^n$ to $1^n$ cannot contain more than one influential edge. Given this, the analysis reduces to calculating the probability that a random walk crosses a given edge, and then adding over all the influential edges. The problem with general non-monotone functions is that a random walk could cross two influential edges, one violating and the other not, and discover nothing from the values at the endpoints. Therefore, this idea does not seem to directly pan out for monotonicity testing.
}

\ignore{

Nonetheless, the idea of taking a random walk and testing ``far away'' points is useful. To see this, consider a toy example. Consider a function $f:\{0,1\}^{n+1}\mapsto\{0,1\}$ such that $f(0,x) = 0$ if $|x|\leq n/2-\sqrt{n}$ and $1$ if $|x|>n/2-\sqrt{n}$. However, $f(1,x) = 0$ if $|x| \leq n/2 + \sqrt{n}$ and $1$ otherwise. Note that the violations to monotonicity are across the first dimension (which the tester is not aware of) on the edges $((0,x),(1,x))$
for $n/2-\sqrt{n} \leq |x|\leq n/2+\sqrt{n}$. The function is $\eps$-far from monotone for some constant $\eps$. Also observe the edge tester needs $\Theta(n)$ queries on this function.

How does a random walk tester fare on this function? With constant probability it picks a random point $(0,x)$ with $n/2-\sqrt{n} <|x|\leq n/2$. So, $f(0,x) = 1$. Then suppose it makes $\sqrt{n}$ steps. If in any of these steps it flips coordinate $1$, then observe that it will catch a violation. This is because it will end up at $(1,x')$ with $n/2 < |x'| \leq n/2+\sqrt{n}$ that evaluates to $0$. The probability that this dimension $1$ is flipped in $\sqrt{n}$ steps is $\Theta(1/\sqrt{n})$; and thus the random walk tester works with $O(\sqrt{n})$ queries for this function.

Consider the case of what we call almost monotone functions. Such 
functions are two monotone functions defined on two hypercubes, with all the violations between these hypercubes. That is, all violations to monotonicity are along one unknown coordinate. The example in the previous paragraph is that of an almost monotone function.
Note that the above `flip once in $\sqrt{n}$ steps' argument can be carried over if we could lower bound the probability that we start at an endpoint of a violated edge and end at the endpoint of another violated edge. Since the function is $\eps$-far, we know that the number of such endpoints is $\Omega(\eps2^n)$, but unlike the example of the previous paragraph, these could be peppered throughout the cube. 

Our first result is precisely bounding such a probability. In particular, we show that if $\eps$ fraction of the points in the hypercube are, say, colored blue, then the probability a random walk starts and ends at blue points is $\Omega(\eps^2/\ln(1/\eps))$. 
Note that if we sample two points independently, the probability that they are both blue would be $\eps^2$;  our result shows that the correlation due to being endpoints of a random walk does not degrade the probability by much. This  allows us to analyze random walk testers for almost monotone functions.

The above idea can be extended to the case where instead of all violations being along one coordinate,
we rather have a {\em matching} of $\Omega(\eps2^n)$ violated edges. In almost monotone functions, this was indeed the case. The idea is to look at the endpoints of this matching that evaluate to $1$, and use the above argument to show that the tester will pick a pair of these with $\poly(\eps)$ probability. After that, we can {\em charge} these events to events that pick a $1$-endpoint and a ancestor $0$-endpoint, taking a loss of $\frac{\eps}{\sqrt{n}}$. The matching property is crucially used in the charging.  However, to make this idea go through, we need to modify the random path tester a bit which leads to a slight degradation in the parameter of $\eps$ in the `blue' result mentioned above. 


To summarize our discussion so far, if there is a large matching of violated edges, then we can lower bound the success probability of the random walk tester by $\poly(\eps)/\sqrt{n}$.
What if there is no large matching of violated edges? For instance, take the example of the {\em anti-majority} function which evaluates to $1$ at points having at most $n/2$ ones.
To address such functions, we prove the following dichotomy theorem.

Suppose $f$ is $\eps$-far from being monotone.
Let $S$ be the set of domain points that evaluate to $1$.
The $O(n/\eps)$ result of Goldreich et al.~\cite{GGLRS00} can be thought of as an {\em isoperimetry} result for the directed hypercube 
stating that the number of (directed) edges leaving $S$ is at least $\Omega(\eps2^n)$. This is tight; consider the set $S$ to be all points with first coordinate $0$
when the function is a dictator. But note that in this case all edges leaving $S$ form a matching. 
We show that for any $\eps$-far function, if there is no ``large" matching of violated edges, then there must be ``lots" of violated edges, and hence
the edge tester works well for these functions.
More precisely, we show that for {\em any} $s>0$, either the size of the largest {\em matching} of edges leaving $S$ is $\Omega\left(\frac{\eps2^n}{s^2}\right)$, or the total number of edges leaving $S$ is $\Omega(s\eps2^n)$. Our main theorem follows from a particular setting of $s$.

The proof of our dichotomy theorem uses two ideas discovered earlier in the context of monotonicity testing. 
The key to proving the dichotomy theorem is to look at the average \emph{Hamming distances between pairs} in a maximal matching $M$ of violated pairs.
For a maximal matching $M$, let $r$ denote the average distance between the pairs.
Our first step is showing that there must exist a matching of violated edges of size at least $|M|/32r^2$. 
This uses a deep theorem of Lehman and Ron \cite{LR01} on multiple source-sink routing over the hypercube. 
It states that for any $k$ routable source-sink pairs with the sources lying on the same layer and the sinks also lying on the same layer, there exist $k$ {\em vertex-disjoint} paths connecting the sources and the sinks. Note that the paths may not respect the pairing. 

We choose $M$ to be the maximal matching which minimizes the average Hamming distances between it's pairs. 
Our second step is to show that if this average distance is $r$, then there must exist $\Omega(r\eps2^n)$ violated edges. 
This proves the dichotomy theorem.
The existence of violated edges follows from an alternating paths
machinery developed in the authors' (unpublished) previous work~\cite{CS12} for general range monotonicity. In short, we use the ``long'' matching pairs along with the ``dimension-crossing'' edges to get a large collection of alternating paths. Each alternating path must contain a violated edge; otherwise we show that one can change $M$ to get another matching of smaller average length.
}

\ignore{
\subsection{The dichotomy and Friedgut's theorem} \label{sec:friedgut}

We find
our dichotomy theorem between the number of directed edges vs largest matching of edges leaving a set of independent interest.
The nearest in spirit is the famous theorem of Friedgut~\cite{Friedgut98}; if a (balanced) function $f$ has average sensitivity $\I(f)$, then there exists a coordinate with influence $\Inf_i \geq 1/2^{O(\I(f))}$. Note that the average sensitivity is the number of {\em undirected} edges leaving $S$ divided by $2^n$. So, if at most $O(r2^n)$ edges leave $S$, then there exists coordinate with $\Omega(2^n/2^r)$ influential edges (and they naturally
form a matching). We contrast with our theorem. We get a much better dependence on $r$, but 
Friedgut gives the much stronger guarantee of a matching along a single dimension. Furthermore, our theorem
works for the directed hypercube, although for sets given by functions that are far from monotone. While Friedgut's theorem
is based on Fourier analysis on the Boolean cube, our proof is entirely combinatorial.
}


\section{The Tester and its Analysis}\label{sec:2}
We start by setting some notation. For binary vectors  $x, y\in\{0,1\}^n$, 
$|x|$ is the number of $1$'s in $x$ and $\|x-y\|_1$ is the $\ell_1$-distance  between $x$ and $y$.
The all zeros and all ones vectors are denoted $0^n$ and $1^n$, respectively.
The directed hypercube is the directed graph with vertex set $\{0,1\}^n$, and a directed edge from $x$ to $y$ if $x\prec y$ and $\|y-x\|_1=1$. 
Throughout the paper,  u.a.r. stands for `uniformly at random'.

Our tester is given as input a parameter $\eps > 0$ and query access to $f$.
The tester will accept if $f$ is monotone, and reject with probability $>2/3$ if $f$ is $\eps$-far from being monotone. 
We assume without loss of generality that $\eps \leq 1/2$ since any function can be made monotone by changing at most $1/2$ of its values.
Furthermore, we will assume $\eps \geq n^{-1/4}$, as conversely, \Thm{main} holds true by dint of the edge tester itself.

We set the following parameters. Let $C_\eps =\sqrt{10\ln(1/\eps)}$. By the assumptions on $\eps$, we get $2 < C_\eps \leq 2\sqrt{\ln n}$.
Let $\ell := 2\ceil{C_\eps\sqrt{n}~}$. Note that $\ell > 4\sqrt{n}$, and $\frac{\ell}{C_\eps} = \Theta(\sqrt{n})$.
We let $I_\ell$ denote the index set $[n/2-\ell/2, n/2 + \ell/2]$. 
For $1\leq i\leq n$, $L_i := \{x\in \{0,1\}^n: |x|=i\}$ denotes the $i$th layer of the directed hypercube.
We refer to $\bigcup_{i \in I_\ell} L_i$ as the \emph{middle layers} of the hypercube. We say an edge of the hypercube lies in the middle layers if both its endpoints lie in the middle layers.

We now describe the random walk based procedure called the \emph{path tester}.
This uses a parameter $\sigma \in (0,1)$ that determines the distance between samples.
%
%
\medskip
\begin{shaded}
{\bf \pathtester$(\sigma)$.}
\submit{\vspace{-3mm}}
\begin{enumerate}
\item Let $\cP$ be the collection of paths in the directed hypercube from $0^n$ to $1^n$.
Pick a path $\p\in \cP$ u.a.r.  Let $X_\p:=\{z\in \p: |z|\in I_\ell\}$.
\item Sample $x\in X_\p$ u.a.r.
\item Let $Y_\p(x) := \{z\in X_\p: ||z-x||_1 \geq \frac{\sigma\ell}{32C_\eps} - 1\}$.
Sample $y\in Y_\p(x)$ u.a.r.
\item Reject if $(x,y)$ violates monotonicity; i.e. $f(x) < f(y), \ x\succ y$ or $f(x) > f(y), \ x\prec y$.
\end{enumerate}
\end{shaded}


This is clearly a pair tester (and is hence non-adaptive and one-sided). 
Our final tester runs either the path tester with a particular $\sigma$ to be fixed later, or the edge tester, each with probability $1/2$.
%

The challenge lies in lower bounding the probability of rejection when the function $f$ is $\eps$-far from monotonicity.
%
%
Henceforth, we assume the function $f$ is $\eps$-far, and we call the rejection event a success. 
Since $f$ is $\eps$-far from monotonicity, any maximal set $M$ of disjoint, violating pairs satisfies 
$|M|\geq \eps2^{n-1}$ (Lemma 3 of~\cite{FLNRRS02}). We refer to $M$ as a {\em matching} of violated pairs.

We start with an easy proposition. 
\begin{proposition} \label{prop:hyper} 
{\em (a)} $|\bigcup_{i \notin I_\ell} L_i| \leq \eps^5 2^n$.
{\em (b)} For all $i\leq n$, a u.a.r path $\p$ contains a u.a.r vertex from $L_i$.
\end{proposition}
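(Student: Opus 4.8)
The plan is to prove the two parts separately, each by a direct counting/coupling argument.

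\textbf{Part (a).} The set $\bigcup_{i\notin I_\ell}L_i$ consists of all points $x$ with $\big||x|-n/2\big| > \ell/2 = \ceil{C_\eps\sqrt n} \geq C_\eps\sqrt n$. I would bound $|\bigcup_{i\notin I_\ell}L_i|/2^n = \Pr_{x}\big[\,\big||x|-n/2\big| > C_\eps\sqrt n\,\big]$ where $x$ is drawn uniformly, equivalently $|x|\sim\mathrm{Bin}(n,1/2)$. A standard Chernoff/Hoeffding tail bound gives $\Pr\big[\,|{|x|-n/2}| > t\sqrt n\,\big] \leq 2e^{-2t^2}$ (or $2e^{-t^2/2}$ depending on the normalization one uses). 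Plugging in $t = C_\eps = \sqrt{10\ln(1/\eps)}$ and using the weaker constant $2$ in the exponent yields a bound like $2e^{-2\cdot 10\ln(1/\eps)} = 2\eps^{20}$, which is comfortably $\leq \eps^5 2^n$ once multiplied back by $2^n$ (the constant $10$ in $C_\eps$ is chosen with exactly this kind of slack in mind; even the weaker Hoeffding exponent $1/2$ gives $2\eps^{5}$, still fine since $\eps\le 1/2$). So this part is a one-line tail estimate; I would just cite the appropriate concentration inequality for the symmetric binomial.

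\textbf{Part (b).} Fix a layer index $i$. A uniformly random monotone path $\p$ from $0^n$ to $1^n$ corresponds to a uniformly random maximal chain in the Boolean lattice, i.e. a uniformly random permutation $\pi$ of $[n]$ determining the order in which coordinates are flipped from $0$ to $1$. The unique vertex of $\p$ in layer $L_i$ is $z = \{\pi(1),\dots,\pi(i)\}$, the set of the first $i$ elements in the permutation order. Since $\pi$ is uniform, the set of its first $i$ elements is a uniformly random $i$-subset of $[n]$, hence a uniformly random vertex of $L_i$. That is the whole argument; I would phrase it via the bijection between monotone $0^n$-to-$1^n$ paths and orderings of $[n]$, and note that marginalizing a uniform ordering onto its first $i$ elements gives the uniform distribution on $\binom{[n]}{i}$.

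\textbf{Main obstacle.} There is no real obstacle here — this is a warm-up proposition. The only things to be careful about are (i) getting the constant in the binomial tail bound to line up with the definition $C_\eps=\sqrt{10\ln(1/\eps)}$ and the target exponent $\eps^5$ (trivial given the generous slack), and (ii) being precise that $\ell/2=\ceil{C_\eps\sqrt n}\geq C_\eps\sqrt n$ so that excluding the index set $I_\ell$ really does exclude everything at deviation more than $C_\eps\sqrt n$. I would spend at most a sentence on each.
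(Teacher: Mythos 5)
Your proposal is correct and matches the paper's argument: part (a) is the same Chernoff/Hoeffding tail bound on $\mathrm{Bin}(n,1/2)$ with the slack from $C_\eps=\sqrt{10\ln(1/\eps)}$, and part (b) is the same symmetry argument, which the paper states as ``the number of paths through a vertex depends only on $|x|$'' and you make explicit via the bijection with orderings of $[n]$. No gaps.
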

\begin{proof} 
By Chernoff bounds, for a u.a.r $x\in \{0,1\}^n$, $\Pr[\big||x| - n/2\big| > \ell/2] \leq 2e^{-\ell^2/2n}$.
Since $\ell = 2\ceil{\sqrt{10n\ln(1/\eps)}~}$, this probability is at most $\eps^5$.
For the second part, observe that the number of paths in $\cP$ that pass through a given vertex $x$ depends solely on $|x|$.
\end{proof}


\subsection{Going from blue to blue} \label{sec:blue} 
Suppose at least $\sigma 2^n$ vertices of the middle layers are colored blue. 
Let $(x,y)$ be a random pair sampled by \pathtester$(\sigma)$, and let $\cE$ be the event that both $x$ and $y$ are blue.
If $x$ and $y$ were chosen {\em independently} u.a.r., then the probability
of both being blue is $\sigma^2$. 
The following lemma shows that this probability does not degrade much even though $x$ and $y$ are correlated (for instance, they form an ancestor-descendant pair).

%

\begin{lemma} \label{lem:blue} 
$\Pr[\cE] = \Omega\left(\dfrac{\sigma^2}{\ln(1/\eps)}\right)$.
\end{lemma}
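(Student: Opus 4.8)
The plan is to reduce the estimate on $\Pr[\cE]$ to a statement about the structure of a single random $0^n$-to-$1^n$ path, and then to average over the path. Fix a u.a.r.\ path $\p\in\cP$. Restricted to the middle layers, $X_\p$ is a contiguous sub-path of length $|I_\ell| = \ell + 1 = \Theta(C_\eps\sqrt n)$, and by \Prop{hyper}(b) each vertex $z\in X_\p$ of layer $i\in I_\ell$ is a u.a.r.\ vertex of $L_i$. Let $B_i$ denote the blue-fraction of $L_i$; then $\sum_{i\in I_\ell} B_i |L_i| \geq \sigma 2^n - |\bigcup_{i\notin I_\ell} L_i| \geq (\sigma - \eps^5)2^n$ by \Prop{hyper}(a), and since we may assume $\sigma$ is not absurdly small (if $\sigma \leq 2\eps^5$ the bound is vacuous after adjusting constants, or one restricts attention to that regime separately), a constant fraction of the blue mass lies in the middle layers. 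The key point is that, conditioned on $\p$, the two samples $x\in X_\p$ and $y\in Y_\p(x)$ are \emph{not} independent, but they become tractable because $X_\p$ is just a path and "blue" is an arbitrary vertex subset of it.

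So the first real step is a purely one-dimensional lemma: on a path $v_0, v_1, \dots, v_m$ ($m = \ell$) with a subset $A$ of marked vertices of density $\beta = |A|/(m+1)$, if we pick $x$ u.a.r.\ from the path and then $y$ u.a.r.\ from $\{z : \|z - x\|_{\text{path}} \geq d\}$ with $d = \Theta(\sigma\ell/C_\eps) = \Theta(\sigma\sqrt n)$, then $\Pr[x\in A \wedge y\in A] = \Omega(\beta^2)$ \emph{provided} $d \leq m/2$, say, which holds because $d/\ell = \Theta(\sigma/C_\eps) = O(1/C_\eps)$ is small. The clean way to see this: write $\Pr[x\in A, y\in A] = \frac{1}{m+1}\sum_{x\in A} \frac{|A \cap \{z : \text{dist} \geq d\}|}{|\{z : \text{dist}\geq d\}|}$. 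Since $d \leq m/3$, for every $x$ the set $\{z : \text{dist}(x,z)\geq d\}$ has size $\geq m/3$, and — this is the crux — it misses at most $2d+1$ vertices of the path, hence $|A \cap \{z:\text{dist}\geq d\}| \geq |A| - (2d+1)$. If $|A| \geq 2(2d+1)$ this is $\geq |A|/2$, giving $\Pr \geq \frac{1}{m+1}\cdot |A| \cdot \frac{|A|/2}{m+1} = \beta^2/2$. The annoying case is when $A$ is small, $|A| < 2(2d+1)$, i.e.\ $\beta = O(d/m) = O(\sigma/C_\eps)$; here one argues instead that $y$ ranges over essentially all but a $\Theta(1)$-fraction of the path, so $\Pr[y \in A \mid x] \geq c\beta$ for a suitable $x$-independent constant $c$, and $\Pr[x\in A] = \beta$, yielding $\Pr \geq c\beta^2$ again — one has to check the conditioning doesn't bite, but since the excluded window $\{z:\text{dist}(x,z)<d\}$ has bounded density $O(\sigma/C_\eps)$ uniformly in $x$, we still catch a constant fraction of $A$ on average. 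Either way we get $\Omega(\beta^2)$ \emph{without} a $\ln(1/\eps)$ loss on a fixed path.

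The second step is to remove the conditioning on $\p$ and pay the $\ln(1/\eps)$. The subtlety is that $B_i$, the blue density of layer $L_i$, varies with $i$, and \Prop{hyper}(b) only tells us the marginal of each $v_i$ is uniform in $L_i$ — it says nothing about the joint distribution of $(v_i, v_j)$ across two layers of a random path. So we cannot directly apply a two-point bound to $(x,y)$ when they lie in different layers. The standard fix (and I expect this to be the main obstacle) is to restrict the analysis to pairs $x,y$ lying in the \emph{same} layer $L_i$: for a fixed layer $i$, conditioned on $|x| = i$ and $|y| = i$, both $x$ and $y$ are u.a.r.\ in $L_i$ but still correlated through the path. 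Actually the cleanest route is: (i) lower-bound the probability, over the path and the sampling, that $|x| = i$ \emph{and} $|y| = i$ for a given $i$ — this is $\Omega(1/|I_\ell|^2) = \Omega(1/(C_\eps^2 n))$ if done naively, which is too lossy. To avoid the loss, instead group the middle layers into $\Theta(C_\eps)$ blocks of $\Theta(\sqrt n)$ consecutive layers each; within one block, the walk from $x$ to $y$ (distance $\geq \Theta(\sigma\sqrt n)$) can stay inside the block. Since there are $\Theta(C_\eps) = \Theta(\sqrt{\ln(1/\eps)})$ blocks and by convexity/pigeonhole one block carries $\Omega(1/C_\eps)$ of the total blue mass, and within that block the sub-path has length $\Theta(\sqrt n)$ with the same uniform-marginal property, the per-block analysis gives a clean two-point bound of order $(\text{block blue density})^2$ times the probability $\Omega(1/C_\eps)$ that both $x$ and $y$ land in that block. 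Multiplying the two factors of $1/C_\eps$ (one from the block-mass split, one from $\Pr[\text{both in block}]$ — wait, that would be $1/C_\eps^2 = 1/\ln(1/\eps)$) gives exactly the claimed $\Omega(\sigma^2/\ln(1/\eps))$. I would organize the write-up so that the one-dimensional marked-path lemma is proved first in full, then the block decomposition and the convexity argument (blue mass $\geq (\sigma - \eps^5)2^n$ over $\Theta(C_\eps)$ blocks, so some block has density $\Omega(\sigma/C_\eps)$ relative to its own size) are assembled around it. The main thing to be careful about is that "distance $\geq \sigma\ell/(32C_\eps) - 1$" is genuinely small compared to a block's length $\Theta(\sqrt n) = \Theta(\ell/C_\eps)$ — indeed the ratio is $\Theta(\sigma)$, comfortably $\leq 1$ — so a walk of that length started anywhere in a block has a constant-probability continuation that remains in the block, keeping all the sampling distributions well-behaved.
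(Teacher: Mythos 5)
Your step 1, the ``one-dimensional marked-path lemma,'' is false as stated, and the failure is exactly in the sparse case you flag as ``annoying.'' Counterexample: fix a path $\p$ and let the blue vertices of $X_\p$ be a contiguous block of fewer than $d=\frac{\sigma\ell}{32C_\eps}-1$ consecutive vertices. Then for every blue $x$, every other blue vertex is at path-distance $<d$, so $Y_\p(x)$ contains no blue vertex and the per-path probability of hitting two blue points is $0$, not $\Omega(\beta^2)$; your proposed fix ($\Pr[y\in A\mid x]\geq c\beta$ with $c$ independent of $x$) is precisely what breaks, since the excluded window around $x$ can swallow all of $A$. The paper never claims a per-path bound: it writes the (possibly negative) lower bound $\frac{b(\p)}{\ell}\cdot\frac{b(\p)-\mu\ell}{\ell}$ with $\mu=\sigma/16C_\eps$, averages over $\p$, and applies Jensen together with the global estimate $\Exp[b(\p)/\ell]\geq \sigma/4C_\eps>\mu$ (Claim 3.2). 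Note also that the distance threshold is not a free parameter $d\leq m/3$: it is calibrated to $\sigma$ so that the excluded window $\mu\ell$ is a constant factor below the guaranteed expected blue count on a random path; your lemma discards that coupling, which is the one thing that makes the sparse case harmless after averaging.

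Your step 2 is built around a non-obstacle and, as an accounting matter, does not give the stated bound. The paper needs nothing about the joint distribution of two layers on a random path: conditioned on $\p$, the whole argument runs through the single number $b(\p)$, and $\Exp[b(\p)]$ follows from the layer-wise uniform marginals (\Prop{hyper}) by linearity; inside one of your blocks you would face exactly the same between-layer correlations anyway, since $x$ and $y$ at distance $\geq d\geq 1$ never share a layer. Quantitatively, for a \emph{specific} block the probability that both samples land in it is $\Theta(1/C_\eps^2)$, not $\Omega(1/C_\eps)$, and the pigeonholed block is only guaranteed path-density $\Omega(\sigma/C_\eps)$, so the best-block version yields $\Omega(\sigma^2/C_\eps^4)$ and the sum-over-blocks version with Cauchy--Schwarz yields $\Omega(\sigma^2/C_\eps^3)$ --- both short of the claimed $\Omega(\sigma^2/C_\eps^2)=\Omega(\sigma^2/\ln(1/\eps))$; your own parenthetical ``wait, that would be\ldots'' is symptomatic of this mismatch. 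The direct route --- bound $\Pr[\cE\mid\p]$ in terms of $b(\p)$, take expectations, apply Jensen, and invoke $\Exp[b(\p)/\ell]\geq\sigma/4C_\eps$ --- is both simpler and the only part of the structure you actually need; I would drop the block decomposition entirely and repair step 1 along those lines.
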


\begin{proof}
For notational convenience, set $\mu := \sigma/16C_\eps$.
This implies\footnote{The curious reader may be wonder why we have a ``$-1$" in the distance condition for $Y_\p(x)$ in the description of the path tester.
This is a technicality so that we have the bound $|X_\p|-|Y_\p(x)| \leq \mu \ell$. Without the $-1$,
the bound would be $\mu \ell + 2$ and can be made $O(\mu \ell)$ only for large
enough $\sigma$. So instead of enforcing such a condition or carrying around a $+2$,
the ``$-1$'' allows for a cleaner presentation.}  $|X_\p| - |Y_\p(x)| \leq \mu \ell$ for any $x\in \p$.
Let $b(\p)$ be the random variable denoting the number of blue points in $X_\p$ corresponding to a random path $\p$.
Let $\cE_x$ and $\cE_y$ be the events that the first and second points are blue; that is $\cE = \cE_x \wedge \cE_y$.
Abusing notation, $\p$ will also denote the event that $\p$ is the sampled path.

Conditioned on a path $\p$ being sampled, the probability of the first point $x$ sampled by the path tester being blue is $b(\p)/\ell$. Formally, $\Pr[\cE_x~|~\p ]  = b(\p)/\ell$.

Conditioned on the path being $\p$ and the first point being $x$ (irrespective of it being blue or not), the probability that the second point $y$ is blue is the number of blue points in $Y_\p(x)$ divided by 
$|Y_\p(x)|$. The number of blue points in $Y_\p(x)$ is at least $b(\p) - \mu\ell$ since $|X_\p| - |Y_\p(x)| \leq \mu \ell$.
Therefore,
$$\Pr[\cE_y ~|~ \p,x \textrm{ first point}] =  \frac{|\textrm {blue points in $Y_\p(x)$}|}{|Y_\p(x)|}\geq \frac{b(\p) - \mu\ell}{\ell}.$$

Since the above inequality holds for all $x$ (in particular, any blue $x$), 
$$\Pr[\cE_y~|~\p,\cE_x] \geq \frac{b(\p) - \mu\ell}{\ell}.$$

\ignore{
Now, 
$$\Pr[\cE_y|\p,\cE_x] = \frac{\sum_{x\in B(\p)} \Pr[\cE_y|\p,x \textrm{ first point}]\Pr[x \textrm{ first point}] }{\sum_{x\in B(\p)} \Pr[x \textrm{ first point }]}$$
 }
Together, we get
\begin{eqnarray}
\Pr[\cE]  & = & \sum_{\p\in\cP} \Pr[\cE_y|\p,\cE_x]\cdot\Pr[\cE_x|\p]\cdot\Pr[\p] \nonumber \\
& \geq &  \sum_{\p\in \cP} \left(\frac{b(\p) - \mu\ell}{\ell}\cdot\frac{b(\p)}{\ell}\cdot\frac{1}{|\cP|}\right) \nonumber \\
& = & \frac{1}{|\cP|}\sum_{\p\in \cP} \left(\frac{b(\p)}{\ell}\right)^2 - \frac{\mu}{|\cP|} \sum_{\p \in \cP} \frac{b(\p)}{\ell}. \label{eq:1}
\end{eqnarray}

In the following, we use $\Exp[...]$ to denote the expectation over the choice of the path $\p$. Note that $\Exp[b(\p)/\ell] := \frac{1}{|\cP|} \sum_{\p\in \cP} (b(\p)/\ell)$, anf thus we can express the bound of \Eqn{1} in terms of expectations.  The second inequality is an application of Jensen's inequality.
\begin{eqnarray}
	\Pr[\cE] & \geq & \Exp[(b(\p)/\ell)^2] - \mu \Exp[b(\p)/\ell] \notag \\
	& \geq & (\Exp[b(\p)/\ell])^2 - \mu \Exp[b(\p)/\ell] = \Exp[b(\p)/\ell] (\Exp[b(\p)/\ell] - \mu) \label{eq:newone}
\end{eqnarray}

The following claim lower bounds the expectation
\begin{claim} \label{clm:exp} $\Exp[b(\p)/\ell] \geq \frac{\sigma}{4C_\eps}$.
\end{claim}
\begin{proof}
Note that, for all $i$,
$|L_i|\leq {n\choose n/2} \leq \frac{2^n}{\sqrt{n}}$.
Let $n_i$ be the number of blue vertices in layer $L_i$. Note that $\sum_{i\in I_\ell} n_i\geq \sigma 2^n$. Let $Z_i$ be the indicator variable for the $i$th layer vertex in $\p$ being blue.
Hence, $b(\p) = \sum_{i \in I_\ell} Z_i$.
For all $i$, a $\p$ chosen u.a.r from $\cP$ contains a uniform random vertex in layer $L_i$ (\Prop{hyper}).
Thus,
$$\Exp[Z_i] = \frac{n_i}{|L_i|} \geq \frac{\sqrt{n}}{2^n} \cdot n_i.$$
%
%
Using linearity of expectation and the bound $\ell < 4C_\eps \sqrt{n}$,
$$\Exp[b(\p)/\ell] \geq \frac{\sqrt{n}}{\ell 2^n}\sum_{i \in I_\ell} n_i \geq  \frac{\sigma\sqrt{n}}{\ell} \geq \frac{\sigma}{4C_\eps}.$$
\end{proof}

%
%
The function $h(x) = x(x-\mu)$ is increasing when $x \geq \mu/2$. 
The lower bound of \Clm{exp} gives $\Exp[b(\p)/\ell] \geq \sigma/4C_\eps > \mu$.
Substituting in \eqref{eq:newone} gives $\Pr[\cE] \geq (\sigma/4C_\eps)(\sigma/4C_\eps - \mu)$. Plugging back $\mu = \sigma/16C_\eps$ completes
the proof of \Lem{blue}.
%
\end{proof}

\subsection{Large violated-edge matchings are good} \label{sec:match} 
We bound the success of the path tester when a large matching of violated 
edges exists. 


\begin{lemma}\label{lem:piece2}
Suppose there exists a matching $E$ of violated edges all lying in the middle layers of the hypercube. 
Set $\sigma = |E|/2^n$.
Then \pathtester$(\sigma)$ succeeds with probability 
$ \Omega\left(\frac{\sigma^3}{\sqrt{n}\ln(1/\eps)}\right)$.
\end{lemma}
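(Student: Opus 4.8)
I want to show that when there is a matching $E$ of violated edges in the middle layers, the path tester with $\sigma = |E|/2^n$ succeeds with probability $\Omega(\sigma^3/(\sqrt{n}\ln(1/\eps)))$. The natural strategy is to color blue the $1$-endpoints of the edges of $E$ — call this set $B$, so $|B| = |E| = \sigma 2^n$ — and apply \Lem{blue}. That lemma tells me that with probability $\Omega(\sigma^2/\ln(1/\eps))$ the path tester samples a pair $(x,y)$ with both $x,y \in B$. That alone is not a violation; I need to convert a "both endpoints blue" event into an actual detected violation, and this is where the extra $\sigma/\sqrt{n}$ loss must come from.

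**The charging argument.** Suppose the sampled pair $(x,y)$ has $x,y\in B$, and WLOG (by the symmetry in the definition of the path tester, which samples an unordered-ish pair along the path) $x \prec y$, i.e. $x$ is the ancestor. Each of $x,y$ is the $1$-endpoint of a unique matched edge; let $y' \prec y$ be the $0$-endpoint of $y$'s matched edge, so $(y',y)$ is a violated edge of $E$. The idea is to argue that conditioned on having sampled $x$ first and then some blue point, there is a decent chance that we instead sample $y'$ (the $0$-endpoint just below $y$) as the second point. Since $y' \prec y$ and $f(y')=1 > 0 = f(y)$... wait — I need $x$ and $y'$ to form a violation. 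Here is the cleaner version: given that the first point $x$ is blue ($f(x)=1$), if the second point happens to be the $0$-endpoint $z$ of $x$'s own matched edge with $x \succ z$... no. Let me reconsider: the right move is that when we pick a blue first point $x$, we then want the second point to be a point $w$ with $f(w)=0$ and $w$ comparable to $x$ in the violating direction. The matching $E$ hands us, for the blue point $y$ that \Lem{blue} would have delivered as the second point, its partner $y'$ with $f(y')=0$; since $y'$ and $y$ differ in one coordinate and are adjacent layers, $y' \in Y_\p(x)$ whenever $y \in Y_\p(x)$ (they're essentially at the same distance from $x$, up to the $\pm 1$ slack — this is exactly why the "$-1$" is built into the definition of $Y_\p(x)$). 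So within the fixed path $\p$ and fixed first point $x$, each blue second-point candidate $y$ has its partner $y'$ also in $Y_\p(x)$, and $|Y_\p(x)|$ has size $\Theta(\ell) = \Theta(\sqrt{n}C_\eps)$; thus conditioned on the second point being a particular blue $y$, swapping to $y'$ costs a factor $1/|Y_\p(x)| = \Omega(1/(\sqrt n C_\eps))$ in probability. Combined with \Lem{blue}'s $\Omega(\sigma^2/\ln(1/\eps))$ and $C_\eps = \Theta(\sqrt{\ln(1/\eps)})$, and noting that $x \prec y$ iff $x \prec y'$ is the relevant comparability (since $y' \prec y$), we get a violation $x \prec y'$, $f(x) = 1 > 0 = f(y')$ detected with probability $\Omega\bigl(\tfrac{\sigma^2}{\ln(1/\eps)} \cdot \tfrac{1}{\sqrt n \sqrt{\ln(1/\eps)}}\bigr)$ — but I still owe one more factor of $\sigma$, which I think comes from the fact that the "both blue" event of \Lem{blue} does not pin down which blue point is the second one, so I must instead re-run the \Lem{blue} computation with the second point restricted to partners, not arbitrary blue points; equivalently, I lower-bound $\Exp[b(\p)/\ell]$ as before but pay an additional $\sigma$ somewhere in relating $b(\p)$ to the count of partner-points reachable.

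**Where the main obstacle is.** I believe the cleanest route is not to invoke \Lem{blue} as a black box but to redo its two-stage computation with the modification that the first point is drawn from $B$ (the blue $1$-endpoints) and the second point from the $0$-endpoints $B' = \{y' : (y',y)\in E\}$, using that the path tester's distribution on the second point, conditioned on $\p$ and $x$, is uniform on $Y_\p(x)$. The probability becomes roughly $\Exp_\p[(b(\p)/\ell)(b'(\p)/\ell)]$ minus lower-order terms, where $b'(\p)$ counts partner-points on the path. The subtlety — and the main obstacle — is controlling the correlation: for a fixed path $\p$, the blue $1$-endpoints on $\p$ and the $0$-endpoints on $\p$ need not be "aligned" (the partner $y'$ of a blue $y\in\p$ may not itself lie on $\p$). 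This is precisely why a loss is unavoidable: instead of requiring $y'\in\p$, I fix the path through $x$ and $y$ and observe $y'\in Y_\p(x)$ regardless of whether the original path contained $y'$, so I should phrase the second-point sampling as "uniform in $Y_\p(x)$, which contains $y'$." Carrying this through, the extra $\sigma$ factor enters because for each blue first point $x$, only an $O(\sigma)$-fraction... actually, reversing the roles: the honest accounting is that we need BOTH endpoints of a single matched edge to be "hit" by the walk-neighborhood structure, and the probability that the walk's two sample slots land near a matched $1$-endpoint and its $0$-endpoint partner is what degrades by $\sigma/\sqrt n$ relative to the $\sigma^2$ of two independent blue hits. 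I expect the bookkeeping of this conditional expectation, and justifying that $y' \in Y_\p(x)$ via the triangle inequality and the "$-1$" slack, to be the technically delicate part; the rest is linearity of expectation and Jensen, exactly as in \Lem{blue}.
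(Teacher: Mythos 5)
Your high-level plan is the paper's plan: color the $1$-endpoints $B_1$ of the matching blue, invoke \Lem{blue} to get a blue--blue pair with probability $\Omega(\sigma^2/\ln(1/\eps))$, and then charge each blue--blue pair $(x,y)$ with $x\prec y$ to the violating pair $(x,E(y))$, where $E(y)$ is $y$'s matched $0$-endpoint, using injectivity of this map (this is where the matching property is used). But your execution has two genuine gaps. First, you repeatedly put the partner on the wrong side: since $f(y)=1$ and $(y,E(y))$ is a violated edge, necessarily $E(y)\succ y$, so the violation you detect is $x\prec y\prec E(y)$ with $f(x)=1>0=f(E(y))$; with your choice ``$y'\prec y$'' the pair $(y',y)$ is not a violation at all, and $x\prec y$ does not even guarantee comparability of $x$ and $y'$. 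This matters because transitivity through $y$ is exactly what makes every charged pair a violation in the tester's support.

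Second, and more seriously, your quantitative charging step is not valid and does not produce the right loss factor. You propose to swap the second sample from $y$ to its partner \emph{within the same path} $\p$, claiming $E(y)\in Y_\p(x)$ and paying $1/|Y_\p(x)|\approx 1/(\sqrt{n}\,C_\eps)$. But $Y_\p(x)\subseteq X_\p$ consists only of points \emph{on} $\p$, and the partner $E(y)$ need not lie on the sampled path, so conditioned on $(\p,x)$ the probability of drawing $E(y)$ can be zero; the ``$-1$'' slack in the definition of $Y_\p(x)$ does not rescue this. The correct accounting compares unconditional sampling probabilities, $\Pr[\cE_{x,E(y)}]/\Pr[\cE_{x,y}]$, which the paper evaluates as a product of a $\theta$-ratio (bounded below by $1/2$) and the path-count ratio $|\cP_{x,E(y)}|/|\cP_{x,y}| = (u+1)/(n-u-t)$ with $u=\|x-y\|_1$. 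Since pairs in the tester's support satisfy $u\geq \sigma\ell/(32C_\eps)-1=\Omega(\sigma\sqrt{n})$ and the denominator is $\Theta(n)$, this ratio is $\Omega(\sigma/\sqrt{n})$ --- note it already carries a factor of $\sigma$, which is precisely the ``missing $\sigma$'' you acknowledge owing and never locate. Your fallback sketch (re-running the \Lem{blue} computation with second points drawn from the $0$-endpoints $B'$) runs into exactly the alignment problem you name --- partners of blue points on $\p$ need not lie on $\p$ --- and you do not supply a mechanism to overcome it, so as written the proposal does not establish the claimed $\Omega\bigl(\sigma^3/(\sqrt{n}\ln(1/\eps))\bigr)$ bound.
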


\begin{proof} 
We begin with some notation.
Let the set of endpoints of edges in $E$ be $B$. We partition $B$ into $B_0$ and $B_1$, indexed by the value of the function on these vertices.  
That is, $B_0 = \{x\in B: f(x) =0\}$ and $B_1 = \{x\in B: f(x)=1\}$.
Note that $|B_0| = |B_1| = |E|$. 
For any two points $x,y$, let $\cE_{x,y}$ denote the event that the path tester picks $(x,y)$. 
For convenience, in what follows, the pairs of $E$ will be ordered
according to the directed hypercube (so if $(z,z') \in E$, then $z \prec z'$).
We abuse notation to define $E$ as a function.
That is, for edge $(z,z') \in E$, we set $E(z) = z'$ and $E(z') = z$.

We define the following sets of pairs of vertices.
\begin{align*}
\Pi & = \{(x,y) | x \prec y, \|x-y\|_1 \geq \frac{\sigma\ell}{32C_\eps} - 1, x \in B_1, y \in B_1\}. \\
\Pi' & = \{(x,E(y)) | (x,y) \in \Pi\}.
\end{align*}
A few observations. $\Pi$ lies in the support of the pair tester, that is, pairs $(x,y)$ sampled with non-zero probability.
Every pair in $\Pi'$ is a violation; for $(x,y)\in \Pi$, we have $x\prec y$, $y\in B_1$ implying $E(y) \succ y$ and $E(y)\in B_0$. 
Finally, the mapping $(x,y)\in \Pi$ to $(x,E(y))\in \Pi'$ is one-to-one.
This uses the fact that $E$ is a matching (and is a crucial piece of the proof).


Since all pairs in $\Pi'$ are violations, $$\Pr[\textrm{ success }] \geq \sum_{(x,y')\in \Pi'} \Pr[\cE_{x,y'}].$$
Using the mapping between $\Pi'$ and $\Pi$ and that $\Pr[\cE_{x,y}] > 0$ for $(x,y) \in \Pi$,
\begin{eqnarray}
\sum_{(x,y')\in \Pi'} \Pr[\cE_{x,y'}] & = & \sum_{(x,y')\in \Pi'} \Pr[\cE_{x,E(y')}]\cdot \frac{\Pr[\cE_{x,y'}]}{\Pr[\cE_{x,E(y')}]} \nonumber \\
& = & \sum_{(x,y) \in \Pi} \Pr[\cE_{x,y}]\cdot \frac{\Pr[\cE_{x,E(y)}]}{\Pr[\cE_{x,y}]} \label{eq:mat}
\end{eqnarray}

We break the remaining proof into simpler claims. 
For a vertex $x$, define $s(x) := |Y_\p(x)| = |\{z \in X_\p: \|z - x\|_1 \geq \sigma\ell/32C_\eps - 1\}|$
where $\p$ is some path containing $x$. This is well-defined since $|Y_\p(x)|$ is {\em independent} of $\p$ for any $\p \ni x$. In fact, 
\begin{equation}\label{eq:defs}
s(x) = \Big|\left\{i\in I_\ell: \big|i - |x|\big|\geq \frac{\sigma\ell}{32C_\eps} -1\right\}\Big|
\end{equation}
The following claim is a routine calculation.

\begin{claim} \label{clm:euv} Suppose $x,y$ are in the middle layers and $\|x-y\|_1 \geq \frac{\sigma\ell}{32C_\eps} - 1$. 
Let $\cP_{x,y}$ denote the set of paths containing both $x$ and $y$.
Define
\begin{eqnarray*}
\theta_{x,y} := 
\frac{1}{\ell}\left( \frac{1}{s(x)} + \frac{1}{s(y)}\right) 
\end{eqnarray*}
Then,
\begin{equation}\label{eq:pxy}
\Pr[\cE_{x,y}] = \theta_{x,y} \frac{|\cP_{x,y}|}{|\cP|}
\end{equation}
\end{claim}

\begin{proof} 
%
Note that
$$ \Pr[\cE_{x,y}] = \sum_{\p\in\cP_{x,y}} \Pr[\p \textrm{ sampled }]\cdot \Pr[x,y \textrm{ sampled }|~\p \textrm{ sampled }].$$
Since $\|x-y\|_1 \geq \frac{\sigma\ell}{32C_\eps} - 1$, $y \in Y_\p(x)$ (and vice versa).
Suppose $x$ is the first point to be sampled; this happens with probability $1/\ell$. 
The probability that $y$ is the second point sampled is $\frac{1}{|Y_\p(x)|}$. 
Arguing analogously when $y$ is sampled first, when $x,y\in X_\p$, 
\begin{eqnarray*}
\Pr[x,y \textrm{ sampled }|~\p \textrm{ sampled }]  = 
\frac{1}{\ell}\left( \frac{1}{|Y_\p(x)|} + \frac{1}{|Y_\p(y)|}\right) = \theta_{x,y}.
\end{eqnarray*}

The proof concludes by noting that $\sum_{\p: x,y\in \p} \Pr[\p \textrm{ sampled}] = \frac{|\cP_{x,y}|}{|\cP|}$.
\end{proof}

%
%
%

The next claim shows that for any $(x,y)\in \Pi$, $\theta_{x,E(y)}$ is almost as large as $\theta_{x,y}$.
\begin{claim} \label{clm:theta}
For $(x,y) \in \Pi$, $\theta_{x,E(y)} \geq \theta_{x,y}/2$
\end{claim}
\begin{proof} 
\ignore{Since $(x,y) \in \Pi$, we have $x \prec y \prec E(y)$ and $\|x-y\|_1 \geq \frac{\sigma\ell}{32C_\eps} - 1$.
Note that $x$ is contained in the middle layers since the matching $E$ lies in the middle layers. Consider some path $\p$ passing through $x$, $y$, and $E(y)$.
Look at the sets $Y_\p(y) = \{z \in X_\p: \|z - y\|_1 \geq \frac{\sigma\ell}{32C_\eps} - 1\}$
and $Y_\p(E(y)) = \{z \in X_\p: \|z - E(y)\|_1 \geq \frac{\sigma\ell}{32C_\eps} - 1\}$.
Since $E(y) \succ y$ and $x \in X_\p$, we can conclude that $|Y_\p(E(y))| \geq |Y_\p(y)|$.
(Equivalently $S(E(y))^{-1} \leq S(y)^{-1}$.)
}
For convenience, let $y'$ denote $E(y)$; note that $y'\succ y$ and $|y'|=|y|+1$.
There exists some path containing $x, y$, and $y'$.
From \Eqn{defs}, $s(y) \leq s(y') \leq s(y)+1$.

%
Putting it all together,
$$\frac{\theta_{x,y'}}{\theta_{x,y}} = \frac{s(x)^{-1} + s(y')^{-1}}{s(x)^{-1} + s(y)^{-1}}
\geq \frac{s(y')^{-1}}{s(y)^{-1}} \geq \frac{s(y)}{s(y)+1} \geq 1/2.$$
The first inequality follows from the observation $\frac{c+a}{c+b} \geq \frac{a}{b}$ whenever $a\leq b$ and $c\geq 0$.
Since $\ell = 2\ceil{C_\eps\sqrt{n}}$, $s(y) \geq \ell - \frac{\sigma\ell}{16C_\eps} \geq 1$,
yielding the final inequality
\end{proof}

\begin{claim} For $(x,y) \in \Pi$, $$\frac{\Pr[\cE_{x,E(y)}]}{\Pr[\cE_{x,y}]} = \Omega\left(\frac{\sigma}{\sqrt{n}}\right).$$
\end{claim}

\begin{proof} Combining \Clm{euv} and \Clm{theta},
\begin{equation}\label{eq:7}
\frac{\Pr[\cE_{x,E(y)}]}{\Pr[\cE_{x,y}]} = \frac{\theta_{x,E(y)}|\cP_{x,E(y)}|}{\theta_{x,y}|\cP_{x,y}|}
\geq \frac{|\cP_{x,E(y)}|}{2|\cP_{x,y}|}
\end{equation}

We know exactly what both the numbers in the RHS are. Say $|x|=t$ and $|y|=t+u$. Note $u\geq \sigma\ell/32C_\eps - 1$ and $|E(y)| = |y|+1$. Then, 
$$|\cP_{x,y}| = t!u!(n-u-t)! ~~~~ \textrm{ and } ~~~~ |\cP_{x,E(y)}| = t!(u+1)!(n-u-t-1)!$$
Plugging in \Eqn{7}, 
$$\frac{\Pr[\cE_{x,E(y)}]}{\Pr[\cE_{x,y}]} \geq \frac{u+1}{2(n-u-t)}.$$ 
The denominator is $\Theta(n)$ since $n/2-C_\eps\sqrt{n}\leq|y|\leq n/2+C_\eps\sqrt{n}$, and $C_\eps \leq 2\sqrt{\ln n}$. The numerator is at least $\sigma\ell/32C_\eps = \Omega(\sigma\sqrt{n})$,
completing the proof.
\end{proof}

Going back to \Eqn{mat}, 
\begin{align*}
\Pr[\textrm{ success }] & \geq \sum_{(x,y) \in \Pi} \Pr[\cE_{x,y}]\cdot \frac{\Pr[\cE_{x,E(y)}]}{\Pr[\cE_{x,y}]} \\
& = \Omega\Big(\frac{\sigma}{\sqrt{n}} \cdot \sum_{(x,y) \in \Pi} \Pr[\cE_{x,y}]\Big)
\end{align*}
Now for the punchline. Color all points in $B_1$ blue. 
By the choice of parameters, $|B_1| = \sigma 2^n$.
\Lem{blue} tells us that the probability of \pathtester$(\sigma)$ sampling a pair $(x,y)$ such that both points are blue is $\Omega(\sigma^2/\ln(1/\eps))$.
This is the event that $x \prec y$ (or $y \prec x)$, $\|y - x\|_1 \geq \frac{\sigma\ell}{32C_\eps} - 1$,
and $x,y \in B_1$. The probability of this event is exactly twice $\sum_{(x,y) \in \Pi} \Pr[\cE_{x,y}]$.
Hence, the probability of success is 
$\Omega\left(\frac{\sigma^3}{\sqrt{n}\ln(1/\eps)}\right)$.
\end{proof}

\subsection{Wrapping it up with the dichotomy} \label{sec:dich}

We state the directed variant of Margulis' theorem. We actually prove a slightly stronger 
dichotomy theorem between the total number of violated edges and largest matching of violated edges.
We let $\Phi^+_f$ be the number of violated edges divided by $2^{n-1}$ (think of this as the ``violation
influence"). We set $\Gamma^+_f$ to be the size of the largest matching of violated edges
\emph{in the middle layers} divided by $2^n$.

%

\begin{theorem} \label{thm:dich}For any function $f$ that is $\eps$-far from monotone,
$\Phi^+_f \cdot \Gamma^+_f \geq \frac{\eps^2}{32}$.
\end{theorem}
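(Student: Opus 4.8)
The plan is to fix a maximal matching $M$ of violated pairs (not edges, but general ancestor-descendant violated pairs), chosen so as to minimize the average $\ell_1$-distance between its pairs. Since $f$ is $\eps$-far, Lemma~3 of~\cite{FLNRRS02} gives $|M| \geq \eps 2^{n-1}$. Let $r$ be the average Hamming distance between the matched pairs of $M$. The argument then splits according to whether $r$ is ``small'' or ``large'', but in fact we prove \emph{two} quantitative bounds, one for each of $\Phi^+_f$ and $\Gamma^+_f$ in terms of $r$, and multiply them.

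First I would lower bound $\Gamma^+_f$ in terms of $r$. The matched pairs in $M$ are $|M|$ routable source-sink pairs; restricting to the layers near $n/2$ (using \Prop{hyper}(a) to argue that only an $\eps^5$-fraction of points, hence a negligible fraction of $M$, can have an endpoint outside the middle layers, for an appropriate version of ``middle layers'') we may assume the sources lie in a bounded band of layers and likewise the sinks. Grouping by (source-layer, sink-layer) pairs — there are only $O(r^2)$ such groups among pairs of distance roughly $r$, and by Markov at least half of $M$ consists of pairs of distance at most $2r$ — some group contains $\Omega(|M|/r^2)$ pairs whose sources are all on one layer and sinks all on another. The Lehman–Ron theorem~\cite{LR01} then produces that many vertex-disjoint monotone paths between these layers; each such path, going from a $1$-valued vertex up to a $0$-valued vertex, must contain a violated edge, and these violated edges form a matching since the paths are vertex-disjoint (and they lie in the middle layers by construction). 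Hence $\Gamma^+_f = \Omega(\eps / r^2)$, i.e. the matching has size $\Omega(\eps 2^n / r^2)$.

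Second I would lower bound $\Phi^+_f$ in terms of $r$, which is where the alternating-paths machinery of~\cite{CS12} enters. The idea is that if there were very few violated edges, then one could locally reroute $M$: take a long matched pair $(x,y)\in M$ with $\|x-y\|_1$ close to $r$ or larger, walk along a monotone path from $x$ to $y$, and if no edge on that path is violated, use the non-violated edges together with the pairs of $M$ to build an alternating path that lets us swap $(x,y)$ out of $M$ in favor of a strictly shorter violated pair (or pairs), contradicting the minimality of the average distance of $M$. Quantifying this — each alternating structure ``consumes'' a violated edge, and we need $\Omega(r)$ violated edges per unit of length saved, summed over a constant fraction of $M$ — should yield $\Phi^+_f = \Omega(r\eps)$, i.e. $\Omega(r \eps 2^n)$ violated edges. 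Multiplying the two bounds, $\Phi^+_f \cdot \Gamma^+_f = \Omega(r\eps \cdot \eps/r^2) \cdot$ (something)$= \Omega(\eps^2/r)$... so in fact I expect the bounds to be tuned so that the $r$'s cancel exactly, giving $\Phi^+_f \cdot \Gamma^+_f \geq \eps^2/32$; tracking the precise constants (the $1/32$) is a bookkeeping matter I would defer.

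The main obstacle is the second step: making the alternating-paths rerouting argument precise, in particular showing that the absence of violated edges on these paths really does let one decrease the average length of a \emph{maximal} matching without destroying maximality, and getting the linear-in-$r$ count of violated edges rather than something weaker. The first step is more routine given Lehman–Ron, the only care needed being the layer-bucketing and confining everything to the middle layers so that the resulting violated-edge matching is admissible for \Lem{piece2}. I would also need to handle the degenerate case $r = O(1)$ (e.g. $r<1$ or the matching already being a matching of edges) separately, where $\Gamma^+_f = \Omega(\eps)$ directly and $\Phi^+_f = \Omega(\eps)$ by Goldreich et al.~\cite{GGLRS00}, so the product is $\Omega(\eps^2)$ trivially.
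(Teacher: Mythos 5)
Your overall skeleton---fix a maximum matching $M$ of violated \emph{pairs} with minimum average length $r$, use Lehman--Ron to produce a violated-edge matching (lower bounding $\Gamma^+_f$), use alternating paths to lower bound $\Phi^+_f$ by $\Omega(r\eps)$, and multiply---is exactly the paper's, but your first step has a genuine quantitative gap, and it is precisely the gap the theorem lives or dies on. Two problems. First, your bucket count is off: pairs of length at most $2r$ are classified by (source layer, sink layer), and the source layer ranges over the whole middle band $I_\ell$ of $\Theta(\sqrt{n\ln(1/\eps)})$ layers, so there are $\Theta(r\sqrt{n\ln(1/\eps)})$ buckets, not $O(r^2)$; the largest single bucket is only guaranteed to have $\Omega(|M|/(r\sqrt{n\ln(1/\eps)}))$ pairs. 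Second, and more fundamentally, even if you did get a bucket of size $\Omega(|M|/r^2)$, the product of your two bounds is $\Omega(r\eps\cdot \eps/r^2)=\Omega(\eps^2/r)$, which degrades with $r$; there is no constant-tuning that makes the $r$'s ``cancel exactly'' within the pick-the-largest-bucket strategy. The paper's missing ingredient is to \emph{not} pick one bucket: apply \Thm{LR} to every bucket $M_{i,j}$ with $i,j\in I_\ell$, $j-i\le 2r$, extract one violated edge per vertex-disjoint path to get matchings $F_{i,j}$, and take the multiset union $F$ (so $|F|\ge |M|/4$). One then proves that no vertex is incident to more than $2r$ edges of $F$, whence a greedy extraction gives a matching of violated edges of size $|F|/4r \ge |M|/16r$, i.e. $\Gamma^+_f \ge \eps/32r$ --- the extra factor of $r$ that makes $\Phi^+_f\cdot\Gamma^+_f \ge \eps^2/32$ independent of $r$. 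The degree bound is itself nontrivial: it uses a ``no crossing pairs'' structural property of $M$, which requires choosing $M$ to additionally maximize the secondary potential $\sum_{(x,y)\in M}\|x-y\|_1^2$ among minimum-average-length maximum matchings, together with an exchange argument that re-pairs two buckets along their Lehman--Ron paths to manufacture a crossing if some vertex met edges from two ``crossing'' buckets. None of this appears in your proposal, so as written the argument proves at best $\Phi^+_f\cdot\Gamma^+_f=\Omega(\eps^2/r)$ (and in fact something weaker, by the bucket-count issue).

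On the second step, your sketch has the right flavor but is the part you yourself defer, and it is where the paper's other technical work lies: the clean statement is per-dimension --- for every coordinate $i$, the number of violated edges crossing dimension $i$ is at least $|M_i|$, the number of pairs of $M$ crossing dimension $i$ (Theorem~\ref{thm:cs}). This is proved by following alternating sequences between $M$ and the perfect matching $H$ of dimension-$i$ edges and showing that a sequence with no violated $H$-edge could never terminate, since termination would permit an exchange strictly decreasing the total length of $M$, contradicting minimality; summing $\sum_i|M_i| = r|M|\ge r\eps 2^{n-1}$ gives $\Phi^+_f\ge r\eps$. Finally, your proposed shortcut for the ``degenerate case $r=O(1)$'' does not quite work either: a matching of violated pairs of length $2$ or $3$ is not a matching of violated edges, so you cannot read off $\Gamma^+_f=\Omega(\eps)$ directly; fortunately the general argument above handles all $r\ge 1$ uniformly and no case split is needed.
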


\begin{proof}
Recall, since $f$ is $\eps$-far from monotonicity, any maximal matching $M$ of violated {\em pairs} (not edges) must have cardinality $|M|\geq \eps2^{n-1}$.
For any such matching $M$ of violated pairs, define the \emph{average length of $M$} to be the quantity
$$|M|^{-1}\sum_{(x,y)\in M} \|y - x\|_1$$
Choose $M$ to be a  \emph{maximum} cardinality matching of violated pairs with the smallest average length, denoted by $r$.
\begin{lemma}\label{lem:piece3}
If the average length of $M$ is $r$, then $\Gamma^+_f \geq \frac{\eps}{32r}$. \\That is,  
there exists a matching $E$ of violated edges all lying in the middle layers of the hypercube with size at least $\frac{\eps 2^{n}}{32r}$.
\end{lemma}
\begin{proof}Deferred to \Sec{p3}.\end{proof}
\begin{lemma}\label{lem:piece4}
If the average length of $M$ is $r$, then $\Phi^+_f \geq r\eps$. \\ 
That is, 
 there are at least $r \eps 2^{n-1}$ violated edges.
\end{lemma}
\begin{proof}Deferred to \Sec{p4}.\end{proof}

The proof of the theorem follows from the above two lemmas.
\end{proof}


It is now routine to prove \Thm{main} and \Thm{asmon}.

\begin{proofof}{\Thm{main}} Set $s = n^{1/8} \eps^{3/2}$ and $\sigma = n^{-1/8} \eps^{1/2}/32$.
We will argue that either the edge tester or \pathtester$(\sigma)$ has a success probability
of $\Omega(n^{-7/8}\eps^{3/2}\ln^{-1}(1/\eps))$.

The success probability of the edge tester is exactly $\Phi^+_f/n$. If $\Phi^+_f \geq s$,
then the edge tester succeeds with the desired probability. So let us assume that $\Phi^+_f < s$.
By \Thm{dich}, $\Gamma^+_f  \geq \frac{\eps^2}{32s} = \sigma$.
Therefore, 
there exists a matching of violated edges of size $\sigma 2^n$, all of them lying in the middle layers.
We apply \Lem{piece2} and bound the success probability of \pathtester$(\sigma)$
by $\Omega\left(\frac{\sigma^3}{\sqrt{n}\ln(1/\eps)}\right) = \Omega\left(n^{-7/8}\eps^{3/2}\ln^{-1}(1/\eps)\right)$.
%
%
\end{proofof}
\as{
\begin{proofof}{\Thm{asmon}} Note that $\Phi^+_f \leq \I(f)$. From \Thm{dich}, 
$\Gamma^+_f \geq \frac{\eps^2}{32\I(f)}$. If we set $\sigma$ to be this lower bound,
then \pathtester$(\sigma)$ succeeds with probability $\Omega(n^{-1/2}\eps^6\I^{-3}(f)\ln^{-1}(1/\eps))$.
%
We do not need the edge tester for these functions.
%
\end{proofof}
}

\subsection{Proof of \Lem{piece3}}\label{sec:p3} 
\noindent

We first state the routing theorem of Lehman and Ron. Recall that $L_i$ was defined as the set of points of the hypercube with exactly $i$ ones.
\begin{theorem}[Lehman-Ron~\cite{LR01}] \label{thm:LR}
Let $S \subseteq L_i$ and $R\subseteq L_j$ with  $|S| = |R| = m$ and $i < j$. Furthermore, suppose there is a bijection $\phi:S \mapsto R$ such that $x \succ \phi(x)$, $\forall x \in S$, that is, $(x,\phi(x))$ are ancestor-descendants. Then there exists $m$ {\em vertex disjoint directed paths} from the set $S$ to $R$.
\end{theorem}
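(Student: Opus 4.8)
The plan is to obtain the $m$ vertex-disjoint paths from Menger's theorem, with the real work being a lower bound on the minimum vertex cut between $S$ and $R$; the bijection $\phi$ and the fact that $S,R$ each lie in a single layer are exactly what force that bound.

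\textbf{Step 1 (Menger reduction).} Adjoin a super-source $s^\ast$ with edges into every vertex of $S$ and a super-sink $t^\ast$ with edges from every vertex of $R$, and place a unit capacity on every hypercube vertex. By the vertex form of Menger's theorem for directed graphs (equivalently, integrality of max-flow with integer vertex capacities), the existence of $m$ vertex-disjoint directed $S$-to-$R$ paths is equivalent to the statement that every set $C$ of hypercube vertices whose deletion destroys all directed paths from $S$ to $R$ satisfies $|C|\ge m$. Since directed edges strictly increase Hamming weight and $i<j$, any such $C$ may be assumed to lie in the layers $L_i,\dots,L_j$. Fix a cut $C$ of minimum size and suppose, for contradiction, that $|C|<m$.

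\textbf{Step 2 (structure of a minimum cut).} Let $A$ be the set of vertices reachable from $S$ by a directed path avoiding $C$; then $S\setminus C\subseteq A$, $A\cap C=\emptyset$, and $A\cap R=\emptyset$ (a path witnessing $r\in A\cap R$ would be an $S$-to-$R$ path avoiding $C$). Because weight increases along directed paths, $A\cap L_i=S\setminus C$. Every directed edge out of $A$ lands in $C$; write $B:=\{v\in C:\ u\to v\text{ for some }u\in A\}\subseteq C$ for this upper boundary, and note $B\subseteq L_{i+1}\cup\dots\cup L_j$, so $B$ is disjoint from $S\cap C\subseteq L_i$. Now invoke $\phi$: for each $x\in S\setminus C\subseteq A$ we have $x\prec\phi(x)$, $\phi(x)\in R$, and $\phi(x)\notin A$; climbing any saturated chain of the subcube $[x,\phi(x)]$ starting at $x$, there is a first step $u\to v$ that leaves $A$, and then $u\in A$ forces $v\in B$ while $v\in[x,\phi(x)]$ gives $x\prec v\preceq\phi(x)$. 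Thus every $x\in S\setminus C$ owns at least one boundary vertex lying in $[x,\phi(x)]\cap B$, while every $x\in S\cap C$ already lies in $C$.

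\textbf{Step 3 (the crux: disjoint crossings).} It remains to promote these assignments to an \emph{injection} of $S\setminus C$ into $B$: for then $|C|\ge |B|+|S\cap C|\ge|S\setminus C|+|S\cap C|=m$, contradicting $|C|<m$. Equivalently, one wants a system of distinct representatives for the bipartite graph on $S\setminus C$ and $B$ in which $x$ is joined to every $v\in B$ with $x\prec v\preceq\phi(x)$ reachable from $x$ inside $A$; by Hall's theorem this reduces to checking $|N(T)|\ge|T|$ for all $T\subseteq S\setminus C$. This is exactly where the hypotheses $S\subseteq L_i$ and $R\subseteq L_j$ are essential -- for families of ancestor-descendant pairs whose lower endpoints are spread over several layers the conclusion is false -- and it is the technical heart of Lehman and Ron's argument. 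The natural way to carry it out is an ``uncrossing''/alternating-path rerouting of the chains from $x$ to $\phi(x)$, or an induction in which each source is pushed one layer up inside its own subcube $[x,\phi(x)]$ while preserving the invariant ``bijection onto a single higher layer'', the layer rigidity being what forces the Hall inequality at each step. I expect this step to be the main obstacle: Steps 1--2 are essentially bookkeeping, whereas making all the boundary crossings simultaneously disjoint is the genuinely nontrivial content of the theorem.
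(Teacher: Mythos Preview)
The paper does not prove this statement: Theorem~\ref{thm:LR} is quoted as an external result of Lehman and Ron~\cite{LR01} and used as a black box in the proof of Lemma~\ref{lem:piece3}. There is therefore no ``paper's own proof'' to compare against.

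As for your proposal on its own merits: Steps~1 and~2 are standard and correct, but as you yourself flag, Step~3 is where the actual content lies, and you have not carried it out. Naming the Hall condition that needs to be verified and gesturing at ``uncrossing'' or layer-by-layer induction is not a proof; in particular, the claim that every $T\subseteq S\setminus C$ satisfies $|N(T)|\ge |T|$ is precisely the Lehman--Ron theorem in disguise, so invoking it here is circular. The actual Lehman--Ron argument proceeds by induction on $j-i$: one shows that the $m$ sources in $L_i$ can be pushed to $m$ distinct vertices in $L_{i+1}$, each still below its assigned target in $R$, and then recurses. The single-step push is where Hall's theorem is genuinely applied, and verifying Hall's condition there requires a careful counting argument exploiting the regularity of the bipartite graph between consecutive hypercube layers (specifically, that up-degrees and down-degrees are constant within a layer). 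Your Menger framing can be made to work, but it does not sidestep this counting; it merely repackages it.
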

For convenience, we represent the matching through ordered pairs, so if $(x,y) \in M$, then $x \prec y$ (and $f(x) = 1$, $f(y) = 0$).
Recall $M$ is a maximum cardinality matching in the violated graph with the smallest average length. Among such matchings, let $M$ actually be one maximizing 
$$\Psi(M) := \sum_{(x,y)\in M} ||x-y||^2_1$$

We prove a structural claim regarding $M$. Two pairs $(x,y)$ and $(x',y')$ {\em cross} if (a) 
there exists a $z$ such that 
$x \prec z \prec y$ and $x' \prec z \prec y'$ and (b) the intervals $[|x|,|y|]$ and 
$[|x'|,|y'|]$ strictly cross, meaning that neither interval contains the other.
Note that the interval $[|x|,|y|]$ and $[|x'|,|y'|]$ are ranges of integers.
(By (a), the intervals $[|x|,|y|]$ and 
$[|x'|,|y'|]$ must intersect.) Recall $|x|$ is the number of ones in $x$.
%
%

\begin{claim}\label{clm:ucross}
There are no crossing pairs in $M$.
\end{claim}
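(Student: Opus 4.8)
The plan is to argue by contradiction: suppose two pairs $(x,y)$ and $(x',y')$ in $M$ cross, witnessed by a common point $z$ with $x \prec z \prec y$ and $x' \prec z \prec y'$, and with the integer intervals $[|x|,|y|]$ and $[|x'|,|y'|]$ strictly crossing. Without loss of generality say $|x| \leq |x'|$; then strict crossing forces $|x| \leq |x'| \leq |y| \leq |y'|$ with at least one of the outer inequalities strict in the appropriate sense (so that neither interval contains the other), in particular $|x'| \le |y|$. The idea is to ``uncross'' the two pairs by rematching $x$ with $y'$ and $x'$ with $y$ (or some such swap routed through $z$), producing a new matching $M'$ on the same vertex set, hence of the same cardinality, and then to show $M'$ either has strictly smaller average length, or the same average length but strictly larger $\Psi$ — contradicting the extremal choice of $M$.

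First I would check that the new pairs are genuine violated pairs. Through $z$ we have $x \prec z \prec y'$ (since $x \prec z$ and $z \prec y'$) and $x' \prec z \prec y$, so both $(x,y')$ and $(x',y)$ are ancestor–descendant pairs. For them to be violated we need $f$ to be $1$ on the smaller point and $0$ on the larger; since $f(x)=f(x')=1$ and $f(y)=f(y')=0$, both $(x,y')$ and $(x',y)$ are indeed violated. So $M' := (M \setminus \{(x,y),(x',y')\}) \cup \{(x,y'),(x',y)\}$ is a valid matching of violated pairs using exactly the same four vertices, hence $|M'| = |M|$, so $M'$ is still a maximum cardinality matching.

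Next I would compare lengths. Write $a=|x|, b=|x'|, c=|y|, d=|y'|$ with $a \le b \le c \le d$ (the crossing condition guarantees this ordering up to swapping the roles of the two pairs, and rules out one interval nesting inside the other). The total $\ell_1$-length of the two old pairs is $(c-a)+(d-b)$; that of the two new pairs is $(d-a)+(c-b)$. These are equal, so average length is preserved — the swap is ``length-neutral'' at the level of layer indices, which is exactly why we need the secondary potential $\Psi$. Now compare $\Psi$: the old contribution is $(c-a)^2 + (d-b)^2$ and the new is $(d-a)^2 + (c-b)^2$. A short computation gives $[(d-a)^2+(c-b)^2] - [(c-a)^2+(d-b)^2] = 2(d-c)(c-b) + \cdots$; more cleanly, setting $p=c-a \ge 0$, $q=d-b\ge 0$, the new squared lengths are $(p+(d-c))^2$-type expressions, and since the crossing is strict one of $d-c$ or $c-b$ (equivalently the ``overlap slack'') is positive, forcing the new $\Psi$ to be strictly larger. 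This contradicts maximality of $\Psi(M)$ among minimum-average-length maximum matchings, establishing the claim.

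The main obstacle I anticipate is bookkeeping the case analysis of what ``strictly cross'' forces on the four integers $|x|,|x'|,|y|,|y'|$, and making sure the length-of-pairs computation (which is in $\ell_1$ distance, not just layer indices) actually behaves monotonically with the layer gaps — one must use that $x \prec z \prec y$ etc. to relate $\|x-y\|_1$ to $|y|-|x|$ (in fact $\|x-y\|_1 = |y|-|x|$ when $x \prec y$, which makes this clean), and similarly that $x \prec y'$ gives $\|x - y'\|_1 = |y'| - |x|$. Once that identification is in place, the swap is purely an identity about sums of squares of the integer gaps $a\le b \le c \le d$, namely $(d-a)^2 + (c-b)^2 > (c-a)^2 + (d-b)^2$ whenever the intervals strictly cross (so $a < b$ or $c < d$ together with $b \le c$), which is elementary. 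I would also need to double-check that $M'$ is still a valid matching, i.e. that $x,x',y,y'$ are four distinct vertices so no vertex is reused — this follows since $M$ is a matching and $(x,y), (x',y')$ are distinct edges, and $x \ne x'$ (different layers unless $a=b$, in which case one argues directly) and $y \ne y'$ similarly.
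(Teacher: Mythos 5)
Your proposal is correct and is essentially the paper's own argument: swap the crossing pairs $(x,y),(x',y')$ for $(x,y'),(x',y)$, use the common point $z$ to certify the new pairs are violations, observe cardinality and total (hence average) length are preserved, and derive a contradiction with the maximality of $\Psi$. One small imprecision: the strict increase in $\Psi$ is $2(|x'|-|x|)(|y'|-|y|)$, so you need \emph{both} $|x|<|x'|$ and $|y|<|y'|$ (not just one of them), which is exactly what ``neither interval contains the other'' gives once the intervals intersect.
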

\begin{proof}
Suppose $(x,y)$ and $(x',y')$ cross. Consider $M'$ formed by deleting these pairs from $M$ and adding $(x,y')$ and $(x',y)$. These are valid violations due to the presence of the vertex $z$. Furthermore, observe that $\Psi(M') > \Psi(M)$ since the sum of squares of a pair of numbers having a fixed sum increases as the maximum (of the pair) increases.
\end{proof}


For every two levels $i < j$ of the hypercube, let $M_{i,j} \subseteq M$ be the subset of pairs with endpoints in the level sets $L_i$ and $L_j$.
Apply \Thm{LR} to get a collection of $|M_{i,j}|$ vertex disjoint paths. Each of these vertex disjoint paths contain at least one violated edge,
and let $F_{i,j}$ be the set of these edges. Note that $F_{i,j}$ forms a matching. Consider the {\em multiset} $F$ formed by the union of $F_{i,j}$ over the set $\{(i,j): i,j\in I_\ell,~i < j, ~j - i \leq 2r\}$. Note that $F$ may contain more than one copy of the same edge. Also note that all edges of $F$ lie in the middle layers.

\begin{claim}\label{clm:size}
$|F| \geq |M|/4$.
\end{claim}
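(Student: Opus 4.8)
The plan is to show that $F$ is a multiset obtained by charging every pair of $M$ lying in the ``close'' level pairs $\{(i,j): i,j \in I_\ell, i<j, j-i \leq 2r\}$, while discarding only pairs that are either too long (span more than $2r$ levels) or that have an endpoint outside the middle layers $\bigcup_{i \in I_\ell} L_i$; and to bound both sources of loss by $|M|/4$ together. First I would note that for each admissible pair $(i,j)$ in the index set, the Lehman--Ron routing (\Thm{LR}) applied to $M_{i,j}$ produces $|M_{i,j}|$ vertex-disjoint paths, each of which contains at least one violated edge (since $f$ is $1$ on the $L_i$ endpoints and $0$ on the $L_j$ endpoints, any monotone path from $L_i$ to $L_j$ would have to flip somewhere), so $|F_{i,j}| \geq |M_{i,j}|$, giving $|F| \geq \sum |M_{i,j}|$ over admissible $(i,j)$. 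It remains to lower bound $\sum_{(i,j) \text{ admissible}} |M_{i,j}|$ by $|M|/4$, i.e.\ to show that at most $3|M|/4$ pairs of $M$ are lost.

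The second step is to control the two kinds of lost pairs. \emph{Long pairs:} since the average length of $M$ is $r$, i.e.\ $\sum_{(x,y)\in M}\|y-x\|_1 = r|M|$, Markov's inequality gives that fewer than $|M|/2$ pairs have $\|y-x\|_1 > 2r$; equivalently, more than $|M|/2$ pairs satisfy $j - i \leq 2r$. \emph{Pairs touching the non-middle layers:} by \Prop{hyper}(a), $|\bigcup_{i\notin I_\ell} L_i| \leq \eps^5 2^n$, so the number of pairs of $M$ with an endpoint outside the middle layers is at most $\eps^5 2^n \leq \eps^5 \cdot 2 \cdot (\eps 2^{n-1})^{-1}\cdot |M|\cdot (\eps2^{n-1}) $; more simply, since $|M| \geq \eps 2^{n-1}$, this count is at most $\eps^5 2^n = 2\eps^4 \cdot \eps 2^{n-1} \leq 2\eps^4 |M| \ll |M|/4$ using $\eps \leq 1/2$. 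Hence the pairs that are \emph{both} short ($j-i\leq 2r$) and entirely within the middle layers number at least $|M| - |M|/2 - 2\eps^4|M| \geq |M|/4$ (comfortably, since $2\eps^4 \leq 1/8$). Each such pair lies in $M_{i,j}$ for some admissible $(i,j)$, so $\sum_{(i,j)\text{ admissible}} |M_{i,j}| \geq |M|/4$, and therefore $|F| \geq |M|/4$.

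I expect the only genuinely delicate point to be making sure the index set in the definition of $F$ — namely requiring $i,j \in I_\ell$ \emph{and} $j-i \leq 2r$ — exactly matches the two discard categories, so that a pair survives iff it is short and middle-contained; this is bookkeeping rather than a real obstacle. One should also double-check the trivial claim that each routed path contains a violated edge: a vertex-disjoint directed path from a point of $L_i$ (with $f=1$) to a point of $L_j$ (with $f=0$) must contain an edge $(u,v)$ with $f(u)=1$, $f(v)=0$, $u \prec v$, which is precisely a violated edge. With these in hand the counting above closes the claim.
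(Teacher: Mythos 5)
Your proposal is correct and follows essentially the same route as the paper: define $F$ by routing each $M_{i,j}$ via Lehman--Ron, note each vertex-disjoint path from a $1$-endpoint to a $0$-endpoint contains a violated edge, then discount the long pairs via Markov's inequality and the pairs touching non-middle layers via Proposition~\ref{prop:hyper}(a) together with $|M|\geq\eps 2^{n-1}$. The only difference is cosmetic bookkeeping in how the two loss terms are combined, and your bound $\eps^5 2^n \leq 2\eps^4|M| \leq |M|/8$ is just a slightly sharper version of the paper's $\eps^5 2^n \leq |M|/4$.
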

\begin{proof}
Note that $|F| = \sum_{(i,j): i,j \in I_\ell, i < j, j-i\leq 2r} |M_{i,j}|$.
Since the matching $M$ has average length $r$, by Markov's inequality, at least 
$|M|/2$ of these  pairs have length at most $2r$. Furthermore, from \Prop{hyper} we get that at most $\eps^52^n \leq |M|/4$ pairs in $M$ have endpoints not in the middle layer. Looking at the remainder, we get $\sum_{(i,j): i,j \in I_\ell, i < j, j-i\leq 2r} |M_{i,j}| \geq |M|/4$.
\end{proof}

\begin{claim}\label{clm:degree}
No point $z\in \{0,1\}^n$ has more than $2r$ edges of $F$ incident on it.
\end{claim}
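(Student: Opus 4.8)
The plan is to bound $\deg_F(z)$ by the number of level-pairs $(i,j)$ that can possibly contribute an edge at $z$, and then to control that number by $2r$ via a re-routing argument together with the no-crossing property of $M$ (\Clm{ucross}). Since each $F_{i,j}$ is a matching, at most one of its edges touches $z$, so $\deg_F(z)$ is at most the number of level-pairs $(i,j)$ --- with $i,j\in I_\ell$, $i<j$, $j-i\le 2r$ --- whose Lehman--Ron routing of $M_{i,j}$ has a path through $z$ whose edge selected into $F_{i,j}$ is incident on $z$; call such $(i,j)$ \emph{active}. Assume $f(z)=1$ (the case $f(z)=0$ is symmetric, reversing incoming and outgoing edges). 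Then the edge of $F_{i,j}$ at $z$ is a violated edge $(z,z')$ with $|z'|=|z|+1$; as $z$ and $z'$ lie on a common path of $M_{i,j}$ from level $i$ to level $j$, every active $(i,j)$ has $i\le|z|$ and $j\ge|z|+1$.

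The crucial point is that, because $\|x-y\|_1=|y|-|x|$ for a violated pair $x\prec y$, both the average length of $M$ and $\Psi(M)=\sum_{(x,y)\in M}\|x-y\|_1^2$ depend only on the multiset of level-gaps of $M$. Hence, for any level-pair $(i,j)$, replacing the pairing of $M_{i,j}$ by the pairing induced by the vertex-disjoint Lehman--Ron paths --- each source matched to the sink its path reaches, again a valid violation --- produces a matching of violated pairs with the same cardinality, the same average length, and the same $\Psi$. Performing this simultaneously for every active $(i,j)$ with $i<|z|$ (these touch pairwise-disjoint sub-matchings) yields a matching $M^{*}$ that is still of maximum cardinality, minimum average length, and maximum $\Psi$, and that contains, for each such $(i,j)$, the pair $(a,b)$, where $a\leadsto z\leadsto b$ is the active path; since $i<|z|<j$ this pair straddles $z$, i.e. $a\prec z\prec b$. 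By the swap argument in the proof of \Clm{ucross}, $M^{*}$ --- being of maximum $\Psi$ among matchings of maximum cardinality and minimum average length --- has no crossing pairs. So if two active pairs with $i<|z|$ had strictly crossing level intervals, their images in $M^{*}$ would cross at the vertex $z$, a contradiction. The level intervals of the active pairs with $i<|z|$ are therefore pairwise nested, and, all containing $|z|$ in their interior, form a chain $I_1\subsetneq\cdots\subsetneq I_k$ under inclusion.

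To finish, each $I_m$ contains $[|z|-1,|z|+1]$, hence has at least three integer points, while $j-i\le 2r$ caps it at $2r+1$ points; since successive members of the chain grow by at least one point, $k\le 2r-1$. There is also at most one active $(i,j)$ with $i=|z|$: such an $(i,j)=(|z|,j)$ forces $z$ to be a source of $M_{|z|,j}$, hence $z$ to be the lower endpoint of the $M$-pair whose upper endpoint lies in $L_j$, and $z$ belongs to at most one pair of $M$. Altogether the number of active level-pairs, and so $\deg_F(z)$, is at most $(2r-1)+1=2r$.

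The step I expect to be the main obstacle is the middle paragraph: observing that re-routing each $M_{i,j}$ is "free" (it disturbs neither the average length nor $\Psi$), so that the purely path-theoretic fact that several vertex-disjoint path systems pass through $z$ can be upgraded into an honest crossing pair of an equally extremal matching; and then noticing that the boundary active pairs with $i=|z|$ do not produce a straddling pair and must be bounded separately, via the uniqueness of $z$'s $M$-partner.
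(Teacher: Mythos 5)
Your proof is correct and follows essentially the same route as the paper's: re-route each relevant $M_{i,j}$ along its Lehman--Ron paths (which changes neither cardinality, average length, nor $\Psi$), invoke the no-crossing property of the resulting equally extremal matching to force the level intervals through $z$ to be nested, and count a nested chain of intervals of length at most $2r$. The only cosmetic differences are that you re-route all level-pairs at once rather than two at a time, and you handle the boundary case $i=|z|$ separately via the matching property of $M$ (getting $2r-1+1$) where the paper absorbs it into the same nesting count.
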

\begin{proof}
Pick a vertex $z$, and pick any two edges $f_1$ and $f_2$ of $F$ incident on it. Since each $F_{i,j}$ is a matching, these must lie in different $F_{i,j}$'s. Suppose they are $F_{i,j}$ and $F_{a,b}$, where $i$ could be $a$ and $j$ could be $b$, but not both together.
Note that $i \leq |z| \leq j$ and $a \leq |z| \leq b$.
 We claim that  $[i,j]$ and $[a,b]$ cannot cross, and therefore one must strictly lie in the other. 
 There can be at most $2r$ intervals containing $|z|$ satisfying such containment relationships.
%
Thus, there can be most $2r$ edges of $F$ incident on $z$.
 
 We claim that $[i,j]$ and $[a,b]$ cannot cross. 
 To see this,  consider the pairs in $M_{i,j}$, and let them be $(x_1, y_1), (x_2, y_2), \ldots, (x_k, y_k)$.
Note that  \Thm{LR} implies $k$ vertex disjoint paths containing all these vertices. Hence, there is some
permutation $\pi$ such that for each $i \in [k]$, there is a path from $x_i$ to $y_{\pi(i)}$. 
Let $M'_{i,j} = \{(x_i,y_{\pi(i)})\}$. Similarly, define $M'_{a,b}$.
Let $M'$ be the matching where $M_{i,j}$ and $M_{a,b}$ are replaced by $M'_{i,j}$ and $M'_{a,b}$, respectively, and all other pairs remain.
Note that $M'$ has the same average length, same cardinality {\em and} $\Phi(M') = \Phi(M)$. 
But now, we have a pair $(x,y) \in M'_{i,j}$ such that $x\prec z \prec y$, and a pair $(x',y') \in M'_{a,b}$ such that $x'\prec z\prec y'$.
This is because $z$ is incident to an edge in both $F_{i,j}$ and $F_{a,b}$. If $[i,j]$ and $[a,b]$ cross, then $(x,y)$ and $(x',y')$ cross.
Since $M'$ maximizes the potential $\Phi$, \Clm{ucross} is contradicted.
\end{proof}

Since the multigraph induced by $F$ has maximum degree $2r$, there exists a matching $E\subseteq F$ of size $|E| \geq |F|/4r$. This can be obtained by picking an edge in $E$ arbitrarily and deleting all edges incident to any of its endpoints from $F$. For every edge added to $E$ there are at most $4r$ edges deleted from $F$. Therefore, following \Clm{size}, we get a matching $E$ of violated edges of size $\geq |M|/16r \geq \eps2^{n-5}/r$. Furthermore, they all lie in the middle layers. This completes the proof of \Lem{piece3}.

\subsection{Proof of \Lem{piece4}}\label{sec:p4}
\noindent

Let $M_i$ be the set of pairs in $M$ that cross dimension $i$, that is, $M_i := \{(x,y)\in M: x_i=1, y_i=0\}$. 
\Lem{piece4} follows from following theorem. 

\begin{theorem}\label{thm:cs}
For all $i$, the number of violated edges across dimension $i$ is at least $|M_i|$.
\end{theorem}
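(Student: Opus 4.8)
\textbf{Proof plan for \Thm{cs}.} The plan is to fix a dimension $i$ and build a large collection of vertex-disjoint \emph{alternating paths} in dimension-$i$ edges (the edges crossing coordinate $i$), each of which is forced to contain a violated edge across dimension $i$, and each of which can be charged to a distinct pair of $M_i$. First I would restrict attention to the subcube structure: write each vertex as $(b, w)$ where $b \in \{0,1\}$ records coordinate $i$ and $w \in \{0,1\}^{n-1}$ records the rest, so that a dimension-$i$ edge is exactly a pair $((0,w),(1,w))$, and a dimension-$i$ violated edge is one where $f(0,w) = 1 > 0 = f(1,w)$. For a pair $(x,y) \in M_i$ we have $x = (1, w_x)$, $y = (0, w_y)$ with $w_x \preceq w_y$ (dropping coordinate $i$ keeps the order, since $x \prec y$ and $x$ has a $1$ where $y$ has a $0$ in coordinate $i$), $f(x) = 1$, $f(y) = 0$; so the two ``shadows'' $(0, w_x)$ and $(1, w_y)$ are the candidate endpoints of dimension-$i$ edges we want to reach.

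\textbf{Key steps.} (1) For each $(x,y) = ((1,w_x),(0,w_y)) \in M_i$, consider the two dimension-$i$ edges $a = ((0,w_x),(1,w_x))$ and $b = ((0,w_y),(1,w_y))$; note $f$ at the top of $a$ is $f(x) = 1$ and $f$ at the bottom of $b$ is $f(y) = 0$. If $a$ itself is violated (i.e. $f(0,w_x) = 1$) or $b$ itself is violated (i.e. $f(1,w_y) = 0$), we have already found a violated dimension-$i$ edge ``belonging'' to this pair. Otherwise $f(0,w_x) = 0$ and $f(1,w_y) = 1$, and I would set up an alternating argument on the $(n-1)$-cube indexed by $w$: the endpoints $w_x$ (where the ``lower face'' value jumps from $0$ to, eventually, $1$ somewhere) and $w_y$ give, via Lehman–Ron-type or direct augmenting-path reasoning, a path of $w$'s along which the pattern of $f(0,\cdot)$ and $f(1,\cdot)$ values must change, forcing some $w$ on the path with $f(0,w) = 1, f(1,w) = 0$ — a violated dimension-$i$ edge. (2) Make the assignment $M_i \ni (x,y) \mapsto$ (a violated dimension-$i$ edge) injective: this is where the alternating-paths machinery of \cite{CS12} enters — one picks the paths to be vertex-disjoint in the $w$-cube, or argues that two pairs of $M_i$ mapping to the same violated edge would let us shorten or re-pair $M$ contradicting minimality of its average length (exactly the kind of swap used in \Clm{ucross}). (3) Conclude $\#\{\text{violated edges across dim } i\} \ge |M_i|$.

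\textbf{Summing up.} Given \Thm{cs}, \Lem{piece4} follows immediately: $\sum_i |M_i| = \sum_{(x,y)\in M} \|x-y\|_1 = r|M| \geq r\eps 2^{n-1}$ since each pair $(x,y)\in M$ crosses exactly $\|x-y\|_1$ dimensions, and the violated-edge sets across distinct dimensions are disjoint, so the total number of violated edges is at least $\sum_i |M_i| \geq r\eps 2^{n-1}$, i.e. $\Phi^+_f \geq r\eps$.

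\textbf{Main obstacle.} The crux is step (2): producing the \emph{injection} from $M_i$ into violated dimension-$i$ edges. It is easy to produce, for each pair, \emph{some} violated edge, but different pairs could collide on the same edge; ruling this out needs the structural minimality of $M$ (smallest average length among maximum matchings, and then maximum $\Psi$) together with a careful argument that any collision yields an alternating path in the $w$-cube along which one can reroute $M$ to strictly decrease its average length or increase $\Psi$ — the alternating-paths argument imported from \cite{CS12}. Setting up these alternating paths so that they are simultaneously (i) forced to contain a violated dim-$i$ edge and (ii) disjoint enough to give an injection is the technically delicate part; everything else is bookkeeping on the subcube decomposition.
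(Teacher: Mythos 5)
There is a genuine gap, and it sits exactly where you wave your hands in step (1). Producing, for a single pair of $M_i$, \emph{some violated edge across dimension $i$} is not a local matter and does not follow from any ``Lehman--Ron-type or direct augmenting-path reasoning'' on the $w$-cube alone (Lehman--Ron is used only in \Lem{piece3}, not here). Concretely, take $n=2$, write points as $(b,w)$ with $b$ the dimension-$i$ coordinate, and set $f(0,0)=1$, $f(1,0)=1$, $f(0,1)=0$, $f(1,1)=0$. The pair $\bigl((0,0),(1,1)\bigr)$ is a violated pair crossing dimension $i$, its two shadow edges are not violated, and in fact there is \emph{no} violated dimension-$i$ edge at all: along any monotone walk in $w$ the value pattern can pass from $(1,1)$ to $(0,0)$ through $(0,1)$ rather than through the violating pattern $(1,0)$. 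So your claim that the pattern ``must change, forcing some $w$ with $f(0,w)=1,\ f(1,w)=0$'' is simply false for an arbitrary violated pair; it can only be rescued by using that the pair belongs to the specially chosen $M$ (maximum cardinality, minimum average length). This is precisely what the paper's proof does: it alternates between the perfect matching $H$ of all dimension-$i$ edges and the \emph{full} matching $M$ (not $M_i$), tracks the value/side pattern along the sequence (Claim~\ref{clm:j}), and shows via an explicit rerouting (Claim~\ref{clm:dist}) that if the sequence contained no violated $H$-edge it could never terminate, because termination would let one re-pair $M$ into a violation matching of the same size with strictly smaller total Hamming distance. That rerouting is the crux of \Thm{cs}, and your sketch never supplies it.

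Conversely, you locate the ``main obstacle'' in the wrong place. In the paper's construction injectivity costs nothing: the alternating sequences are connected components of the union of the two matchings $M\cup H$, every vertex has degree at most two there, so distinct sequences are automatically vertex-disjoint (two sequences coincide only when one is the reverse of the other), giving at least $|X|/2=|M_i|$ pairwise disjoint sequences and hence $|M_i|$ distinct violated dimension-$i$ edges. No extra re-pairing argument in the spirit of \Clm{ucross} is needed for collisions. (A small additional slip: for $(x,y)\in M_i$ with $x\prec y$ and $f(x)=1>f(y)=0$ you cannot have $x_i=1$, $y_i=0$; the point with value $1$ is the $\prec$-smaller one and lies on the $b=0$ side. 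The paper's own definition of $M_i$ is loosely worded here, but your parenthetical justification of $w_x\preceq w_y$ is internally inconsistent and should be fixed before building on it.)
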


Note that 
$\sum_i |M_i| = \sum_{(x,y)\in M}\|y - x\|_1 = r|M|$, since a pair $(x,y)$ appears in precisely $\|y-x\|_1$ different $M_i$'s. \Thm{cs} implies that the number of violated edges is at least $\sum_i |M_i|$. \Lem{piece4} follows because $|M|\geq \eps2^{n-1}$. 

We now prove \Thm{cs}.

\begin{proof}
This requires setting up some of the machinery of~\cite{CS12}. 
Let $H$ be the {\em perfect} matching of the hypercube formed by the {\em edges} crossing the  $i$th dimension. 
Let $X$ be the endpoints of $M_i$. For all $x\in X$, we now define a sequence $\bS_x$.
In what follows, we use the shorthand $M(v)$ and $H(v)$ to denote the partners of $v$ in the matchings $M$ and $H$, respectively.
The first term of the sequence, $\bS_x(0)$, is $x$.
For even $i$, $\bS_x(i+1) = H(\bS_x(i))$.
For odd $i$, if $\bS_x(i)\in X$, or is $M$-unmatched, then $\bS_x$ terminates.  Otherwise, $\bS_x(i+1) = M(\bS_x(i))$.  
The best way to think about $\bS_x$ is via alternating paths and cycles formed by the matchings $M$ and $H$.
We start at $x$ and take the $H$-edge along the alternating path. We alternate between $H$ and $M$ edges  till we reach an endpoint of the alternating path or another vertex in $X$. Thus, each $\bS_x$ terminates.
It is not hard to see that if $\bS_x$ ends at $y\in X$, then $\bS_y$ is just $\bS_x$ in reverse. Also note that $\bS_x$ and $\bS_y$ are disjoint unless $y$ terminates $\bS_x$, for otherwise $\bS_x$ would've terminated earlier. Therefore the number of sequences is at least $|X|/2 = |M_i|$.
The theorem is a consequence of the following lemma.

\begin{lemma} \label{lem:bs} For all $x$, $\bS_x$ contains a violated edge in $H$. 
\end{lemma}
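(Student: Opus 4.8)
I would prove \Lem{bs} by contradiction, playing it off against the extremal choice of $M$. So assume $\bS_x$ contains no violated edge of $H$; I will construct from $M$ another matching $M'$ of violated pairs with the same cardinality but strictly smaller total length, contradicting that $M$ is a maximum-cardinality matching of violated pairs of minimum average length. The argument has two halves: first, extract the rigid structure that $\bS_x$ is forced into; second, use it to re-route $M$.

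\emph{Structure of $\bS_x$.} Write $\bS_x=v_0,v_1,v_2,\dots$ with $v_0=x$, so that $v_{2k}\to v_{2k+1}$ is an $H$-edge (flipping coordinate $i$) and $v_{2k+1}\to v_{2k+2}$ is an $M$-edge. Since $(x,M(x))\in M_i$, the vertex $x$ is either the $1$-endpoint of its pair (then $x_i=0$, $f(x)=1$) or the $0$-endpoint (then $x_i=1$, $f(x)=0$); by the $0\leftrightarrow1$ symmetry of the whole setup I may assume the first, so $f(v_0)=1$. Now propagate along $\bS_x$: the $H$-edge $(v_{2k},v_{2k+1})$ is \emph{not} violated while the $M$-edge $(v_{2k+1},v_{2k+2})$ \emph{is}, and these two facts force (by induction) both the period-$4$ value pattern $f(v_0),f(v_1),f(v_2),f(v_3),\dots=1,1,0,0,1,1,0,0,\dots$ and the comparability zigzag $v_0\prec v_1\prec v_2\succ v_3\succ v_4\prec v_5\prec v_6\succ\cdots$. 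Recording the $1$-sets, this yields sets $Q_0\subseteq Q_1\supseteq Q_2\subseteq Q_3\supseteq\cdots$, all avoiding coordinate $i$, with $v_{4j}$, $v_{4j+1}$, $v_{4j+2}$, $v_{4j+3}$ having $1$-sets $Q_{2j}$, $Q_{2j}\cup\{i\}$, $Q_{2j+1}\cup\{i\}$, $Q_{2j+1}$ respectively. In particular every odd-indexed $v_{2k+1}$ has its $i$-coordinate and its value out of phase with any vertex of $X$ (a vertex of $X$ with $i$-coordinate $1$ is a $0$-endpoint, one with $i$-coordinate $0$ is a $1$-endpoint), so $\bS_x$ cannot terminate at a vertex of $X$; since $\bS_x$ does terminate, it terminates at an $M$-unmatched vertex $v_{2m+1}$.

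\emph{Re-routing $M$.} The pairs of $M$ meeting $\{v_0,\dots,v_{2m}\}$ are exactly $(v_0,M(x))$ together with the $m$ $M$-edges $(v_1,v_2),(v_3,v_4),\dots,(v_{2m-1},v_{2m})$ of $\bS_x$; here one uses that $\bS_x$ terminates to see that $M(x)\notin\bS_x$ and that $v_0,\dots,v_{2m+1}$ are distinct, and one notes $\|x-M(x)\|_1\ge 2$ since otherwise $(x,M(x))=(v_0,v_1)$ would be a violated $H$-edge. Let $M'$ be obtained by deleting these $m+1$ pairs and inserting the $m+1$ pairs $(v_1,M(x))$ and $(v_{2k},v_{2k+3})$ for $k=0,1,\dots,m-1$. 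From the $1$-set description: $v_1\prec M(x)$ with $\|v_1-M(x)\|_1=\|x-M(x)\|_1-1$ (only the shared coordinate $i$ is dropped), while $f(v_1)=1$ and $f(M(x))=0$; and each $(v_{2k},v_{2k+3})$ is a violation with its value-$1$ endpoint below its value-$0$ endpoint (the subcases $k$ even and $k$ odd are both routine) and with $\|v_{2k}-v_{2k+3}\|_1=\|v_{2k+1}-v_{2k+2}\|_1$. The inserted pairs are pairwise disjoint and collectively cover $v_{2m+1}$ while leaving $v_{2m}$ uncovered, so $M'$ is again a matching of violated pairs and $|M'|=|M|$. Adding up, the total length of $M'$ is exactly one less than that of $M$, hence its average length is strictly smaller — contradicting the choice of $M$. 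Therefore $\bS_x$ must contain a violated edge of $H$.

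\emph{Main obstacle.} The delicate point is that the re-routing has to telescope over the whole sequence $\bS_x$: a naive local swap leaves the total length unchanged, and the decisive ``$-1$'' appears only because the first inserted pair $(v_1,M(x))$ shortcuts across coordinate $i$ while every other inserted pair $(v_{2k},v_{2k+3})$ has exactly the length of the $M$-edge $(v_{2k+1},v_{2k+2})$ it displaces. Checking these length identities, verifying the orientation of each $(v_{2k},v_{2k+3})$ as a violation, and confirming that $\bS_x$ cannot re-enter $X$ is the bulk of the remaining work; the omitted $0\leftrightarrow1$-symmetric case, where $x$ is a $0$-endpoint, is handled identically with all comparabilities and Boolean values reversed.
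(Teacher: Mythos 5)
Your proof is correct and follows essentially the same route as the paper's: the same induction giving the period-four value/side pattern along $\bS_x$ (the paper's Claim~\ref{clm:j}), the same observation that odd-indexed vertices cannot lie in $X$, and the identical re-routing swap replacing $(x,M(x))$ and the $M$-edges of the walk by $(v_1,M(x))$ and the pairs $(v_{2k},v_{2k+3})$, which drops the total length by exactly one and contradicts the minimality of the average length of $M$ (the paper's Claim~\ref{clm:dist} and the ensuing argument). The only cosmetic difference is that you start from the value-$1$ endpoint rather than the value-$0$ endpoint, invoking the same symmetry the paper uses.
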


\begin{proof} We will prove this through contradiction, and will henceforth assume that (for some $x$), $\bS_x$ has no violated edge in $H$.
We show that $\bS_x$ cannot terminate, completing the contradiction.
For brevity, let us use $s_j$ to denote $\bS_x(j)$. Let $(x,y)$ be the pair in $M_i$.
We use $s_{-1}$ to denote $y$. Wlog, assume $x \succ y$, thus $x_i=1$ and $y_i=0$. Also $f(y)=1$ and $f(x)=0$ since the pair is a violation. 

Let $\D_b$ ($b = \{0,1\}$) be the $n-1$ dimensional hypercube where $i$th coordinate is $b$.
We will use $d(x,x')$ for the Hamming distance between two points $x$ and $x'$.
We have the following simple claim.

\begin{claim} \label{clm:j} Let $j \geq 0$ be an index and suppose $s_j$ exists.

\begin{minipage}[t]{0.25\linewidth}
If $j \equiv 0 {\md 4}$, 
\begin{asparaitem}
	\item $f(s_j) = 0$.
	\item $s_j \in \D_1$.
\end{asparaitem}
\end{minipage}
\begin{minipage}[t]{0.25\linewidth}
If $j \equiv 1 {\md 4}$,
\begin{asparaitem}
	\item $f(s_j) = 0$.
	\item $s_j \in \D_0$.
\end{asparaitem}
\end{minipage}
\begin{minipage}[t]{0.25\linewidth}
If $j \equiv 2 {\md 4}$,
\begin{asparaitem}
	\item $f(s_j) = 1$.
	\item $s_j \in \D_0$.
\end{asparaitem}
\end{minipage}
\begin{minipage}[t]{0.25\linewidth}
If $j \equiv 3 {\md 4}$,
\begin{asparaitem}
	\item $f(s_j) = 1$.
	\item $s_j \in \D_1$.
\end{asparaitem}
\end{minipage}
%
%
\end{claim}

\begin{proof} We prove by induction on $j$. For the base case, $s_0 = x$, and $f(x) = 0$, $s_0 \in \D_1$.
Consider $j \equiv 0 {\md 4}$, $j \geq 1$. By the induction hypothesis, $f(s_{j-1}) = 1$ and $s_{j-1} \in \D_1$.
Since $s_j = M(s_{j-1})$ and thus $(s_{j-1},s_j)$ is a violation, $f(s_j) = 0$ and $s_j \in \D_1$.
Consider $j \equiv 1 {\md 4}$. By the induction hypothesis, $f(s_{j-1}) = 0$ and $s_{j-1} \in \D_1$.
Since $s_j = H(s_{j-1})$ and $(s_{j-1},s_j)$ is not a violation by assumption that no $H$-edge is a violation, $f(s_j) = 0$ and $s_j \in \D_0$.
The remaining cases are analogous.
\end{proof}

\begin{claim} \label{clm:dist} Let $j \geq 0$ be even. Then $(s_j, s_{j+3})$ is a violation
and $d(s_j, s_{j+3}) = d(s_{j+1}, s_{j+2})$. Also, the pair $(y,s_1)$ is a violation
and $d(y,s_1) = d(y,s_0) - 1$.
\end{claim}

\begin{proof} Suppose $j \equiv 2 {\md 4}$. From \Clm{j}, $f(s_j) = 1$ and $f(s_{j+3})=0$.
The claim also implies $f(s_{j+1})=1$ and since $(s_{j+1},s_{j+2}) \in M$, $s_{j+2} \succ s_{j+1}$. Furthermore,
both $s_{j+1},s_{j+2}\in \D_1$.
Since $s_j = H(s_{j+1})$ and
$s_{j+3} = H(s_{j+2})$, we get (a) $s_{j+3} \succ s_j$ as well, implying $(s_j,s_{j+3})$ is a violation, and
(b) $d(s_j,s_{j+3})=d(s_{j+1},s_{j+2})$.  The case $j \equiv 0 {\md 4}$ is analogous.

%

Observe that $(y,s_0)$ is a violation where $y \in \D_0$ and $s_0 \in \D_1$. Since $s_1 = H(s_0)$,
$s_1$ has the same coordinates as $s_0$ except for the $i$th one. Also, $f(s_1) = 0$.
Therefore, $(y,s_1)$ is a violation and $d(y,s_1) = d(y,s_0) - 1$.
\end{proof}

We argue that $\bS_x$ cannot terminate, by showing for any odd $j$, $s_{j+1}$ must exists.
(This is trivially true for even $j$, since $H$ is a perfect matching.)
We focus on $j \equiv 1 {\md 4}$ (the case $j \equiv 3 \md 4$ is analogous).
Because $f(s_j) = 0$ and $s_j \in \D_0$, $s_j$ cannot participate in a pair in $M_i$.
Hence, $s_j \notin X$. Suppose $s_j$ was unmatched. Consider the following set
of pairs in $M$: $A = \{ (s_k, s_{k+1}) | \textrm{$k$ odd}, -1 \leq k \leq j-2\}$.
Suppose we replaced these pairs in $M$ by $B = \{(y,s_1)\} \cup \{(s_k,s_{k+3}) | \textrm{$k$ even}, 0 \leq k \leq j-3\}$.
Note that $|A| = \ceil{j/2} = |B|$. Also, by \Clm{dist}
$$ d(y,s_0) + \sum_{\substack{k = 1 \\ \textrm{$k$ odd}}}^{j-2} d(s_k,s_{k+1}) = d(y,s_1) -  1 + \sum_{\substack{k = 0 \\ \textrm{$k$ even}}}^{j-3} d(s_k,s_{k+3}) $$
This means that replacing $A$ and inserting $B$ (in $M$) leads to a violation matching of the same size
with a smaller Hamming distance. This violates the property that $M$  has minimum average Hamming distance, and therefore, the sequence $\bS_x$ cannot terminate. This cannot occur, and therefore, every $\bS_x$ must contain a violated edge in $H$. This ends the proof of \Lem{bs}.
%
%
\end{proof}\end{proof}

\section{Conclusion}
In this paper, we make progress on the question of testing monotonicity of Boolean functions over the hypercube. 
Our approach falls short of the known $\Omega(\sqrt{n})$ lower bound for one-sided, non-adaptive testers.
Nevertheless, we believe the path tester (alone) is a $O(\sqrt{n})$-query monotonicity tester for Boolean functions. 
A possible approach is suggested by \Thm{asmon}. Can we perform a different analysis (or even design a different algorithm) for high average sensitivity
functions?

\section{Acknowledgements}

We thank Cyrus Rashtchian and Naomi Kirshner for pointing out a mistake in an older version.

Sandia National Laboratories is a multi-program laboratory managed and operated by Sandia Corporation, a wholly owned subsidiary of Lockheed Martin Corporation, for the U.S. Department of Energy's National Nuclear Security Administration under contract DE-AC04-94AL85000. CS is grateful for the support received from the Early Career LDRD program at Sandia National Laboratories.


%
\bibliographystyle{amsplain}
\bibliography{derivative-testing}

\end{document}